\DeclareMathOperator{\Erf}{Erf}
\newcommand{\D}{{\rm d}}
\newcommand{\calD}{\mathcal{D}}
\newcommand{\calK}{\mathcal{K}}
\newcommand{\calE}{\mathcal{E}}
\newcommand{\calS}{\mathcal{S}}
\newcommand{\calC}{\mathcal{C}}
\newcommand{\calJ}{\mathcal{J}}
\newcommand{\calF}{\mathcal{F}}
\newcommand{\calT}{\mathcal{T}}
\newcommand{\calTE}{\mathcal{TE}}
\newcommand{\RR}{\ensuremath{ \mathbb{R}} }
\theoremstyle{definition}
\newtheorem{proposition}{Proposition}[section]
\newtheorem{definition}{Definition}[section]
\newtheorem{remark}{Remark}[section]
\title{Infinite-derivative linearized gravity in convolutional form}
\author[1]{\small Carlos Heredia\thanks{carlosherediapimienta@gmail.com}}
\author[2]{Ivan Kol\'a\v{r}\thanks{i.kolar@rug.nl}}
\author[1]{Josep Llosa\thanks{pitu.llosa@ub.edu}}
\author[3,4]{Francisco Jos\'e Maldonado Torralba\thanks{fmaldo01@ucm.es}}
\author[2]{Anupam Mazumdar\thanks{anupam.mazumdar@rug.nl}}
\affil[1]{\small \textit{Facultat de F\'{\i}sica (FQA and ICC)\\ Universitat de Barcelona, Diagonal 645, 08028 Barcelona, Catalonia, Spain}}
\affil[2]{\textit{Van Swinderen Institute, University of Groningen, 9747 AG, Groningen, Netherlands}}
\affil[3]{\textit{Cosmology and Gravity Group, Department of Mathematics and Applied Mathematics, University of Cape Town, Rondebosch 7701, Cape Town, South Africa}}
\affil[4]{\textit{Institute of Theoretical Astrophysics, University of Oslo, P.O. Box 1029 Blindern, N-0315 Oslo, Norway}}
\begin{document}

\maketitle

\begin{abstract}
This article aims to transform the infinite-order Lagrangian density for ghost-free infinite-derivative linearized gravity into non-local. To achieve it, we use the theory of generalized functions and the Fourier transform in the space of tempered distributions $\calS^\prime$. We show that the non-local operator domain is not defined on the whole functional space but on a subset of it. Moreover, we prove that these functions and their derivatives are bounded in all $\RR^3$ and, consequently, the Riemann tensor is regular and the scalar curvature invariants do not present any spacetime singularity. Finally, we explore what conditions we need to satisfy so that the solutions of the linearized equations of motion exist in $\calS^\prime$.
\end{abstract}

\section{Introduction}
\label{sec:Introduction}

Understanding gravity is one of the significant challenges of this century. Einstein's theory of gravity pioneered giving answers and generating unexpected results, such as black holes and gravitational waves. Unfortunately, there are still unsolved problems making us think that General Relativity (GR) is only an effective theory, which works exceptionally well at low energies but breaks down at the high energies (UV regime). Inspired by quantum field theory \cite{Efimov1967,Krasnikov1987, Moffat1990, Evens1991} and quantum gravity \cite{Tomboulis1980,Tomboulis1997,Modesto2012} (for instance, string theory \cite{WITTEN1986, Polchinski1998} among others \cite{Ashtekar2013}), the behaviour of the theory in the UV regime was improved by the presence of \textit{non-locality} in the Lagrangian.

\vspace{0.1cm}
\textit{Infinite Derivative Gravity (IDG)} is a modified theory of gravity that generally behaves better in the UV regime and gives a chance to avoid cosmological and black hole singularities \cite{Biswas2012,Biswas2010}. IDG is made up of form factors, which encapsulate infinite derivatives in series form. More precisely, these form factors are analytic functions with no roots in the complex plane that guarantee the preservation of the same degrees of freedom as in Einstein's gravity, when it is perturbed around particular backgrounds. It is achieved by following a specific criterion \cite{Biswas2013}. This particular choice within IDG is known as \textit{ghost-free infinite-derivative gravity} and is characterised by a mass scale $M_s$ at which non-local effects become relevant. 

\vspace{0.1cm}
There are many examples where IDG shows that regularisation of the gravitational field is possible, for instance, the well-known gravitational potential $1/r$ of pointlike sources at the linearized level \cite{Biswas2012}. A similar property remains true for other types of sources, for example: electromagnetic and NUT charges \cite{Buoninfante2018_2, Kol2020, Boos2021}, accelerated particles \cite{Kol2021}, models of mini-black-hole production \cite{Frolov2015_2}, scalar lumps \cite{Buoninfante2021}, spinning ring distributions \cite{Buoninfante2018_3} and other objects associated with topological defects such as p-branes, cosmic strings and gyratons \cite{Boos2018, Boos2020_3, Boos2020_2}. Furthermore, it was shown that IDG also provides solutions for bouncing cosmology \cite{Biswas2006,Koshelev2012,Biswas2012_2,Kolar2021,Kumar2020,Kumar2021} and gravitational waves \cite{Kilicarslan2019,Dengiz2020,Kol2021_2,Kolar2021_3}. 

\vspace{0.1cm}
Due to the presence of the infinite number of derivatives (or derivatives that appear non-polynomially) in the Lagrangian, the information provided to determine a solution to the field equations ---a priori--- should be infinite. Therefore, there is nothing similar to existence and uniqueness theorems under these circumstances to confirm that the solution exists and is unique. This problem is known as the \textit{initial value problem} or the \textit{Cauchy problem}. However, recent studies \cite{Barnaby2008, Gorka2012, CALCAGNI2008} show the initial value problem might be well-posed even involving infinite derivatives or integro-differential equations. 

\vspace{0.1cm}
Working with infinite series drives us to control their convergence when they act on the functions of class $\calC^\infty(\RR^4)$. Skipping this fact could lead us to the mistake of working with non-convergent series, which would be meaningless. Although such results could be mathematically significant as formal series, numerical results are expected to compare with experimental data.

\vspace{0.1cm}
To avoid this convergence problem, we replace the infinite-derivative linear operator with an integral operator, a convolution operator in the case of translation invariance.  To this end, the theory of generalized functions will be convenient, and the Fourier transform in the space of tempered distributions $\calS^\prime(\RR^4)$ will be used \cite{Vladimirov_GF, Vladimirov,Hormander1990}.
 
\vspace{0.1cm}
As shown below, the space of functions on which the infinite-derivative linear operator is well-defined will be given by a subset of it. Consequently, the functions and their derivatives belonging to this subset ---in the static case--- will be bounded by a polynomial; in other words, they will be regular in all $\RR^3$. Therefore, the Riemann tensor, which involves the second derivatives of the field, will be finite everywhere, and all the scalar curvature invariants shall not present any spacetime singularity.

\vspace{0.1cm}
This approach transforms the infinite-order Lagrangian into a non-local one in which the infinite series are avoided. We will study the solutions of this non-local Lagrangian and observe that they will not be able to be slow-growing functions. Due to the complexity of the initial value problem, we will analyze the homogeneous and static cases separately. We will adopt the point of view of \cite{Llosa1994, Heredia2, Gomis2001} where the equations of motion are taken as constraints that limit the functional space. Within this approach, we will explore the conditions that the Hilbert energy-momentum tensor of matter might satisfy to obtain a solution for both cases. We will show that the solution always exists for the static case if the Hilbert energy-momentum tensor in the Fourier space is built by bounded functions.  

\vspace{0.1cm}
The plan of the article is as follows: In Section (\ref{sectionII}), the infinite-derivative Lagrangian for IDG is introduced; in Section (\ref{sectionIII}), we set up the ghost-free non-local operator and, using the inverse Fourier transform in $\mathcal{S}^\prime(\RR^4)$, we convert it into a convolution operator; in Section (\ref{sectionIV}), we rewrite the infinite-order Lagrangian for IDG in a non-local one; in Section (\ref{sectionV}), the solution of the field equations and what conditions must be satisfied so that the solution exists are discussed. Appendix (\ref{A:S}) is devoted to reviewing some notions of tempered distributions, $\mathcal{S}^\prime(\RR^4)$, addressed to unfamiliar readers with this topic.

\section{Infinite-derivative linearized gravity}
\label{sectionII}

The most general infinite-order, parity-invariant and torsion-free action in 4 dimensions is \cite{Biswas2012}
\begin{equation}
\label{S}
S([g_{\alpha\beta}]) = \frac{1}{16\pi} \int_{\mathbb{R}^4} \D x \sqrt{-g}\left[R + \frac{1}{2} \left( RF_1\left(\Box_s\right)R + R_{\mu\nu}F_2\left(\Box_s\right) R^{\mu\nu} + R_{\mu\nu\sigma\lambda} F_3\left(\Box_s\right) R^{\mu\nu\sigma\lambda}\right)\right] 
\end{equation}
where $F_i(\Box_s)$ are analytic functions of d'Alembertian $\Box_s := \Box/M^2_s = \nabla_{\mu}\nabla^{\mu}/M^2_s$, which are called \textit{form factors}, and $[g_{\alpha\beta}]$ means the functional dependence of the metric field $g_{\alpha\beta}$. We define the integral action (\ref{S}) as a functional on the space of all possible fields whether or not they satisfy the field equations. We call this space as the \textit{kinematic space} $\mathcal{K}\in\mathcal{C}^\infty(\mathbb{R}^4)$. 

\vspace{0.1cm}
As far as this article is concerned, we focus on the first-order expansion of the action (\ref{S}) around the Minkowski background $\eta_{\alpha\beta}= \mathrm{diag}(-1,1,1,1)$ in the Cartesian coordinates $(t,x,y,z)$,
\begin{equation}
g_{\alpha\beta}(x) = \eta_{\alpha\beta} + h_{\alpha\beta}(x) \qquad \mathrm{where} \qquad \left|h_{\alpha\beta}(x)\right| \ll 1 \;.
\end{equation}
According to \cite{Biswas2012}, the infinite-derivative gravity Lagrangian density for a small perturbation $h_{\mu\nu}(x)$ around Minkowski spacetime is
\begin{equation}
\label{Lna}
\begin{split}
\mathcal{L}([h_{\alpha\beta}], x) = \frac{1}{2\kappa}&\left[\frac{1}{2} h^{\mu\nu}(x)a(\Box)\Box h_{\mu\nu}(x) - h^{\mu\nu}(x)a(\Box) \partial_\mu \partial_\alpha h^\alpha_\nu(x) + h^{\mu\nu}(x) c(\Box) \partial_\mu \partial_\nu h(x) \right.\\
&\left. - \frac{1}{2} h(x) c(\Box) \Box h(x) + \frac{1}{2} h^{\mu\nu}(x)\frac{a(\Box) - c(\Box)}{\Box}\partial_\mu \partial_\nu\partial_\alpha\partial_\beta h^{\alpha\beta}(x)\right] 
\end{split}
\end{equation}
where $\kappa = 8\pi G$ stands for Einstein's gravitational constant, $h(x)= \eta^{\alpha\beta} h_{\alpha\beta}(x)$, and $a(\Box)$ or $c(\Box)$ are entire functions with no zeros in the complex satisfying $a(0) = c(0) = 1$. These entire functions can be expanded as formal Taylor series
\begin{equation}
\label{Ts}
a(\Box) = \sum^{\infty}_{n=0} \frac{a^{(n)}(0)}{n!} \Box^n\:, 
\end{equation}
and similarly for $c(\Box)$. In what follows, we focus on the particular case in which non-local theories of linearized gravity are analytical $a(\Box) = c(\Box)$. Consequently, the Lagrangian density (\ref{Lna}) simplifies to
\begin{equation}
\label{La}
\begin{split}
\mathcal{L}([h_{\alpha\beta}],x) = \frac{1}{2\kappa}&\left[\frac{1}{2} h^{\mu\nu}(x)a(\Box)\Box h_{\mu\nu}(x) - h^{\mu\nu}(x)a(\Box) \partial_\mu \partial_\alpha h^\alpha_\nu(x)\right.\\
&\left. + h^{\mu\nu}(x) a(\Box) \partial_\mu \partial_\nu h(x) - \frac{1}{2} h(x) a(\Box) \Box h(x) \right]\:.
\end{split}
\end{equation}
From now on, we address the Lagrangian density as $\mathcal{L}([h_{\alpha\beta}],x):=\mathcal{L}(h_{\alpha\beta},x)$ where the functional dependency is understood although the square bracket does not emphasize it to make the notation simpler.

\vspace{0.1cm}
The integral action (\ref{S}) with the Lagrangian density (\ref{La}) might be divergent because the integration over the whole domain is unbounded. For this reason, we assume the variation of the integral action (\ref{S}) is finite for all variations of $\delta h_{\mu\nu}(x)$ with compact support. Then, the Euler-Lagrange equations are
\begin{equation}
\psi_{\mu\nu}(h_{\alpha\beta},x) = 0 \qquad \mathrm{with} \qquad \psi_{\mu\nu}(h_{\alpha\beta},x) := \int_{\mathbb{R}^4} \D y\:\lambda_{\mu\nu}(h_{\alpha\beta},y,x)
\end{equation}
where 
\begin{equation}
\label{lambda}
 \lambda_{\mu\nu}(h_{\alpha\beta},y,x) := \frac{\delta \mathcal{L}(h_{\alpha\beta},y)}{\delta h^{\mu\nu} (x)}\:.
\end{equation}
In our case, the Euler-Lagrange equations corresponding to (\ref{La}) become
\begin{equation}
\label{EOM1}
\begin{split}
&a(\Box)\left[\Box h_{\mu\nu}(x) - \partial_\sigma(\partial_\nu h_\mu^{\:\sigma}(x) + \partial_\mu h_\nu^{\:\sigma}(x))\right.\\
&\left.+\eta_{\mu\nu}\left(\partial_\rho\partial_\sigma h^{\rho\sigma}(x) - \Box h(x)\right) + \partial_\mu\partial_\nu h(x)\right] = -2\kappa\:T_{\mu\nu}(x) 
\end{split}
\end{equation}
where $T_{\mu\nu}$ is the Hilbert energy-momentum tensor related to some Lagrangian density of matter $\mathcal{L}_M$,
\begin{equation}
\label{T}
T_{\mu\nu} (x):=\int_{\mathbb{R}^4} \D y\:\frac{\delta \mathcal{L}_M(h_{\alpha\beta},y)}{\delta h^{\mu\nu}(x)}\:.
\end{equation}
Rather than working with the metric perturbation $h_{\mu\nu}(x)$, we use the trace-reversed perturbation $\hat h_{\mu\nu}(x) := h_{\mu\nu}(x) - \frac{1}{2}\eta_{\mu\nu} h(x)$. Noticing that $\hat h(x) = - h(x)$ and imposing the gauge condition\footnote{Note that equations of motion (\ref{EOM1}) are gauge invariant --- there are more ``unknowns" than independent equations.} $\partial_\mu \hat h^{\mu\nu}(x)=0$ then the field equations simplify greatly and take the form
\begin{equation} 
\label{EOM}
a(\Box) \Box  \hat h_{\mu \nu}(x) = - 2\kappa\:T_{\mu\nu}(x) \;.
\end{equation}
This equation contains derivatives of any order of $\hat h_{\mu\nu}(x)$, i.e, it is an infinite-order differential equation \cite{Carmichael1936}. Extrapolating what we know about the initial data --- that is, the initial data that determines the solution is the field and all its time derivatives up to order $n-1$---, we would conclude (by analogy) that we need an infinite number of initial data to determine the solution. On the other hand, due to (\ref{Ts}), the left-hand side of (\ref{EOM}) is an infinite series, whose convergence should be guaranteed so that it all makes sense. Because ensuring convergence is not an easy task, we will instead convert the infinite-derivative linear operator $a(\Box)$ into an equivalent\footnote{This equivalence depends on the domain of operators and is still under study \cite{Barnaby2008,Carlsson2014}.}, namely, a convolution, whose domain is a subspace of $\mathcal{K}$ to ensure the summability of the integral operator.

\section{Non-local operator definition}
\label{sectionIII}

Let us consider the linear operator of infinite order $\calD(\partial) = \sum^\infty_{|\alpha|=0} a_\alpha \partial_\alpha$ acting on functions $h_{\mu\nu}(x)\in\mathcal{C}^\infty(\RR^4)$, where $\alpha=(\alpha_1,...,\alpha_n)$\footnote{Along the article, $\alpha$ will also be used as spacetime tensor index. For that reason, we will explicitly state when referring to the multi-index notation. }, $|\alpha|:= n$, $a_\alpha$ are constant coefficients and $\alpha_j = 1,...,4$. Proceeding as follows,
\begin{eqnarray}   \label{e1a}
  \mathcal{D}(\partial) h_{\mu\nu}(x) &=& \sum_{|\alpha|=0}^\infty a_{\alpha}\,\partial_\alpha\,\left[\frac1{(2\pi)^{4}} \int_{\RR^4} \D k \,e^{ikx} \,\tilde h_{\mu\nu}(k) \right]\\[1ex]\label{e1b}
   &=& \frac1{(2\pi)^{4}}\,\sum_{|\alpha|=0}^\infty  \int_{\RR^4} \D k \,a_{\alpha}\,(i)^{|\alpha|}\, k_\alpha \,e^{ikx} \,\tilde h_{\mu\nu}(k) \\[1ex]\label{e1c}
   &=& \frac1{(2\pi)^{4}}\,\int_{\RR^4} \D k \,e^{ikx} \,\tilde h_{\mu\nu}(k)\,\sum_{|\alpha|=0}^\infty a_{\alpha}\,(i)^{|\alpha|}\, k_\alpha \,,
\end{eqnarray}
where $\tilde h_{\mu\nu}(k)$ is the Fourier transform\footnote{See definition (\ref{FT}) for the Fourier transform convention used.} of $h_{\mu\nu}(x)$ and $k_\alpha = k_{\alpha_1}\cdot...\cdot k_{\alpha_n}$, we get 
\begin{equation}
 \label{e2}
\mathcal{D}(\partial) h_{\mu\nu}(x) = \frac1{(2\pi)^4}\,\int_{\RR^4} \D k \,e^{ikx} \,\tilde h_{\mu\nu}(k)\,\tilde{\mathcal{D}}(i k)
\end{equation}
where $\tilde{\mathcal{D}}(i k):= \sum_{|\alpha|=0}^\infty a_{\alpha}\,(i)^{|\alpha|}\, k_\alpha\,$. Note that the validity of steps from (\ref{e1a}) to (\ref{e1c}) requires that
\begin{enumerate}
\item $h(x)$ has Fourier transform, and that the Fourier integral theorem holds,
\item the differentiation under the integral sign and,
\item the series and the integral commute.
\end{enumerate}
As $h_{\mu\nu}(x)$ are smooth ---i.e., they belong to class $\mathcal{C}^\infty(\RR^4)$---, it is sufficient that $h_{\mu\nu}(x)\in L(\RR^4)$ so that the Fourier integral theorem holds \cite{Apostol1976}.

\vspace{0.1cm}
\noindent
Requiring $h_{\mu\nu}(x)$ to be summable is a rather restrictive condition. However, the transition from (\ref{e1a}) to (\ref{e1c}) can be made by avoiding the summability condition in the mark of the tempered distributions $\mathcal{S}^\prime(\RR^4)$. 

\vspace{0.1cm}
\noindent
The field functions $h_{\mu\nu}(x)$ are smooth and, if we assume that they are slow-growing ---more than any polynomial---, we can treat them as regular tempered distributions, i.e. distributions which are made by locally integrable functions of slow growth acting on $\mathcal{S}(\RR^4)$. 

\vspace{0.1cm}
\noindent
It is worth noticing that, although for physical reasons\footnote{It seems reasonable to require that the solutions do not consist of singular distributions, such as the Dirac Delta.}  $h_{\mu\nu}$ might be a regular distribution, its Fourier transform $\tilde{h}_{\mu\nu}$ may be singular.  A rather illustrative example would be $f(x) = x\in\mathcal{S}^\prime(\RR)$. The Fourier transform in $\mathcal{S}^\prime(\RR)$ is $(\tilde f,\varphi) = -2\pi i (\delta,\varphi^\prime)$ with $\varphi\in\calS$, which is clearly a singular distribution due to the Dirac Delta.

\begin{proposition}
\label{def:nonoperator}
Let $h_{\mu\nu}$ be a regular tempered distribution and $\calD(\partial)$ be the non-local operator. The non-local operator acts on $h_{\mu\nu}$ as follows
\begin{equation}
\calD(\partial) h_{\mu\nu} = \mathcal{F}^{-1}\left[\tilde\calD(ik) \tilde h_{\mu\nu}\right]
\end{equation}
where $\tilde{\mathcal{D}}(i k):= \sum_{|\alpha|=0}^\infty a_{\alpha}\,(i)^{|\alpha|}\, k_\alpha$, $\tilde h_{\mu\nu} := \mathcal{F}\left[h_{\mu\nu}\right]$ and $\calF$ stands for the Fourier transform on $\calS^\prime(\RR^4)$.
\end{proposition}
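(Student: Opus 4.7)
The plan is to reduce the infinite-order operator to a limit of finite-order differential operators and then exploit two standard features of the tempered-distribution calculus: (i) the Fourier transform $\calF\colon \calS^\prime(\RR^4)\to\calS^\prime(\RR^4)$ is a continuous linear bijection, and (ii) for any multi-index $\alpha$ and any $f\in\calS^\prime(\RR^4)$ one has the distributional identity $\calF[\partial_\alpha f]=(ik)_\alpha\,\calF[f]$. Both facts are already built into the setup of $\calS^\prime(\RR^4)$ recalled in Appendix \ref{A:S}, so they can be invoked without further justification.

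First, I would introduce the truncated operator $\calD_N(\partial):=\sum_{|\alpha|=0}^{N} a_\alpha\,\partial_\alpha$. Since $h_{\mu\nu}$ is a regular tempered distribution, each $\partial_\alpha h_{\mu\nu}$ lies in $\calS^\prime(\RR^4)$, and by linearity $\calD_N(\partial)h_{\mu\nu}\in\calS^\prime(\RR^4)$ for every finite $N$. Because the sum is finite, no convergence issue appears, and applying $\calF$ term by term gives
\begin{equation*}
\calF\!\left[\calD_N(\partial)h_{\mu\nu}\right] \;=\; \sum_{|\alpha|=0}^{N} a_\alpha\,(i)^{|\alpha|}\,k_\alpha\,\tilde h_{\mu\nu} \;=\; \tilde\calD_N(ik)\,\tilde h_{\mu\nu},
\end{equation*}
where the product on the right is a polynomial (a smooth slow-growing function, hence a $\calS^\prime$-multiplier) times a tempered distribution.

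Second, I would \emph{define} $\calD(\partial)h_{\mu\nu}$ as the $\calS^\prime$-limit of the truncations $\calD_N(\partial)h_{\mu\nu}$, which exists precisely on the subset of $\calK$ announced in the introduction. Since $\calF^{-1}$ is sequentially continuous on $\calS^\prime(\RR^4)$, it intertwines the limits:
\begin{equation*}
\calD(\partial)h_{\mu\nu} \;=\; \lim_{N\to\infty}\calD_N(\partial)h_{\mu\nu} \;=\; \lim_{N\to\infty}\calF^{-1}\!\left[\tilde\calD_N(ik)\,\tilde h_{\mu\nu}\right] \;=\; \calF^{-1}\!\left[\lim_{N\to\infty}\tilde\calD_N(ik)\,\tilde h_{\mu\nu}\right] \;=\; \calF^{-1}\!\left[\tilde\calD(ik)\,\tilde h_{\mu\nu}\right],
\end{equation*}
which is the claimed identity.

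The main obstacle is the third equality above: one must ensure that the sequence of distributions $\tilde\calD_N(ik)\,\tilde h_{\mu\nu}$ converges in $\calS^\prime(\RR^4)$, which in turn requires the formal symbol $\tilde\calD(ik)$ to act as a well-defined multiplier on $\tilde h_{\mu\nu}$. Since $\tilde\calD(ik)$ is in general an entire function of exponential type (not a polynomial), this is a nontrivial growth/support constraint on $\tilde h_{\mu\nu}$; it is exactly what forces the domain of $\calD(\partial)$ to be a proper subset of $\calK$, as anticipated in Section \ref{sec:Introduction}. Operationally I would either (a) assume $\tilde h_{\mu\nu}$ has compact support, so that the multiplication $\tilde\calD(ik)\,\tilde h_{\mu\nu}$ is manifestly well defined as a compactly supported distribution paired against a smooth symbol, or (b) restrict to the subspace of tempered distributions whose Fourier images are tempered against the weight $\tilde\calD(ik)$. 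In either case the bijectivity and continuity of $\calF$ close the argument.
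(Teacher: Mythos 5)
Your proposal is correct and follows essentially the same route as the paper: the distributional identity $\calF[\partial_\alpha f]=(ik)_\alpha\,\calF[f]$ combined with the continuity and linearity of $\calF^{-1}$ on $\calS^\prime(\RR^4)$ to pull the (limit of the) sum through the inverse transform. Your explicit truncation $\calD_N$ and the caveat that $\tilde\calD_N(ik)\,\tilde h_{\mu\nu}$ must converge in $\calS^\prime$ merely make precise the one point the paper leaves implicit at this stage and only resolves in the following subsection, via the auxiliary distribution $\tilde\Phi_{\mu\nu}$ of condition (\ref{e6}).
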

\begin{proof}
Using the fact that the Fourier transform is a continuous linear operator on $\mathcal{S}^\prime$, we obtain
\begin{eqnarray}   \label{e3a}
  \mathcal{D}(\partial) h_{\mu\nu} &=& \sum_{|\alpha|=0}^\infty a_{\alpha}\,\partial_\alpha\,\mathcal{F}^{-1}\left[\tilde h_{\mu\nu}\right]\\[1ex]\label{e3b}
   &=&  \sum_{|\alpha|=0}^\infty  \mathcal{F}^{-1}\left[ \,a_{\alpha}\,(i)^{|\alpha|}\, k_\alpha \,\tilde h_{\mu\nu}\right] \\[1ex]\label{e3c}
   &=&  \mathcal{F}^{-1}\left[ \sum_{|\alpha|=0}^\infty  a_{\alpha}\,(i)^{|\alpha|}\, k_\alpha \,\tilde h_{\mu\nu}\right], \qquad \forall \varphi\in\calS(\RR^4)\:.
\end{eqnarray}
Now, defining 
\begin{equation}
\tilde{\mathcal{D}}(i k):= \sum_{|\alpha|=0}^\infty a_{\alpha}\,(i)^{|\alpha|}\, k_\alpha\:,
\end{equation}
which is a smooth function, we get
\begin{equation}
\calD(\partial) h_{\mu\nu} = \mathcal{F}^{-1}\left[\tilde\calD(ik) \tilde h_{\mu\nu}\right]\:.
\end{equation}
Notice that in this new framework, the step from (\ref{e3a}) to (\ref{e3b}) is a well-known property of Fourier transform --- see \cite{Vladimirov}, equation (9.1) ---  and the step from (\ref{e3b}) to (\ref{e3c}) is allowed because $\mathcal{F}$ and $\mathcal{F}^{-1}$ are a continuous linear operator on $\mathcal{S}^\prime$ --- see proposition (\ref{StoS}). 
\end{proof}

\subsection{Ghost-free non-local operator}

To avoid ghosts and retain the same number of degrees of freedom, one could choose any $a(\Box)$ entire function with no zeros. For this article, we will focus only on the class of theory in which the form factor is
\begin{equation}
\label{a}
a(\Box) := \exp[-\ell^2 \Box]
\end{equation}
where $\ell : = 1/M_s >0$ denotes the length scale at which non-local effects become important\footnote{There is an interesting line of research where even powers of the $\Box$ operator are considered, see for instance \cite{Kol2021}.}. This form factor is called ghost-free non-local operator and guarantees that we can always recover the local regime as $\ell\rightarrow0$ since $a(0) = 1$. It is worth pointing out the importance of the negative sign of $a(\Box)$. As shown in \cite{Biswas2012,Biswas2013}, this negative sign is crucial to recover the correct Newton potential. If $a(\Box)$ were with positive sign, it would lead to imaginary Newton potentials, which would indicate that the theory is not physical \cite{Talaganis2015}. 

\vspace{0.1cm}
Consider the non-local operator $\calD(\partial):= a(\Box)$, then
\begin{equation}
\tilde \calD(ik) = e^{\ell^2 k_\mu k^\mu} = e^{-\ell^2 \omega^2} e^{\ell^2 \mathbf{k}^2}\:.
\end{equation}
Due to the second factor, it is not guaranteed that $\tilde \calD(ik)\tilde h_{\mu\nu}$ is a tempered distribution. There is no problem with the first factor since multiplying a locally integrable function of slow growth ---if $\tilde h_{\mu\nu}$ is considered as a regular distribution--- by the fast damping smooth function $e^{-\ell^2 \omega^2}$ always yields a locally integrable slow-growing function; however, the latter is not true if the factor  $e^{\ell^2 \mathbf{k}^2}$ grows exponentially at infinity. A condition to guarantee that $\calD(ik) \tilde h_{\mu\nu}$ is a tempered distribution is that
\begin{equation}
\tilde\Phi_{\mu\nu}:= e^{\ell^2\mathbf{k}^2}\tilde h_{\mu\nu}\in\mathcal{S}^\prime(\RR^4)\:.
\end{equation}
The above condition is equivalent to saying:
\begin{equation}  \label{e6}
 \exists\,\tilde\Phi_{\mu\nu} \quad\mbox{tempered distribution, such that } \quad 
\tilde h_{\mu\nu} = e^{-\ell^2 \mathbf{k}^2}\,\tilde\Phi_{\mu\nu}\:.
\end{equation}
Furthermore, for the action integral to be meaningful, $e^{-\ell^2\Box}h_{\mu\nu}(x)$ must be a smooth function (of slow growth) --- or maybe a distribution whose domain includes smooth functions of slow growth as $h_{\mu\nu}(x)$ itself. Now, provided that (\ref{e6}) is fulfilled, 
\begin{equation}  \label{e7}
\tilde{\mathcal{D}}(i k)\:\tilde h_{\mu\nu} = e^{-\ell^2 \omega^2}\,\tilde\Phi_{\mu\nu} \quad \mbox{is a tempered distribution}\:.
\end{equation}

\begin{proposition}
\label{def:noperator}
Let $\Phi_{\mu\nu}$ be a regular tempered distribution, and
\begin{equation}
\label{kernels}
\calT_{(\ell)}(t) := \frac{1}{2\ell\sqrt{\pi}} e^{-\frac{t^2}{4\ell^2}} \qquad \mathrm{and} \qquad \calE_{(\ell)}(\mathbf{x}) := \frac{1}{(2\ell\sqrt{\pi})^3} e^{-\frac{|\mathbf{x}|^2}{4\ell^2}} 
\end{equation}
be smooth functions\footnote{In fact, they are the heat kernels in 1-dimension and 3-dimension space where $\ell^2$ plays the role of evolution parameter of the heat equation.}. The non-local operator $\calD(\partial) := a(\Box)$ acting on a regular tempered distribution $h_{\mu\nu}$ can be represented as a convolution 
\begin{equation}
a(\Box) h_{\mu\nu} = \left(\calT_{(\ell)}\ast\Phi_{\mu\nu}\right)
\end{equation} 
as long as 
\begin{equation}
\label{rel:hp}
h_{\mu\nu}  = \left(\calE_{(\ell)}\ast\Phi_{\mu\nu}\right)
\end{equation}
is satisfied. 
\end{proposition}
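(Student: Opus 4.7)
The plan is to combine Proposition \ref{def:nonoperator} (which gives the Fourier representation of $\calD(\partial)$) with the decomposition $\tilde{\calD}(ik) = e^{-\ell^2\omega^2}\,e^{\ell^2\mathbf{k}^2}$ in order to collapse the problematic exponential against $\tilde h_{\mu\nu}$, leaving a Gaussian multiplier that is itself the Fourier image of a Schwartz kernel. First, I would show that the hypothesis (\ref{rel:hp}), namely $h_{\mu\nu} = \calE_{(\ell)} \ast \Phi_{\mu\nu}$ (understood as a partial convolution in the spatial variables), is equivalent to condition (\ref{e6}). This step reduces to computing the spatial Fourier transform of the three-dimensional Gaussian $\calE_{(\ell)}(\mathbf{x})$ and verifying that
\begin{equation}
\calF_{\mathbf{x}}\!\left[\calE_{(\ell)}\right](\mathbf{k}) \,=\, e^{-\ell^2 \mathbf{k}^2}\,,
\end{equation}
so that taking Fourier transforms on both sides of (\ref{rel:hp}) and using the convolution theorem for partial convolutions on $\calS^\prime(\RR^4)$ yields $\tilde h_{\mu\nu} = e^{-\ell^2\mathbf{k}^2}\,\tilde\Phi_{\mu\nu}$ exactly as in (\ref{e6}). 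Since $\calE_{(\ell)}$ is a Schwartz function, the partial convolution with any tempered distribution $\Phi_{\mu\nu}$ is well defined and the convolution/multiplication exchange is rigorous.

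Next, I would insert this into Proposition \ref{def:nonoperator}. With $\tilde\calD(ik) = e^{-\ell^2\omega^2}\,e^{\ell^2\mathbf{k}^2}$ and the just-derived factorisation of $\tilde h_{\mu\nu}$, the two spatial exponentials cancel pointwise as smooth functions, so
\begin{equation}
a(\Box)\,h_{\mu\nu} \,=\, \calF^{-1}\!\left[e^{-\ell^2\omega^2}\,e^{\ell^2\mathbf{k}^2}\,e^{-\ell^2\mathbf{k}^2}\,\tilde\Phi_{\mu\nu}\right] \,=\, \calF^{-1}\!\left[e^{-\ell^2\omega^2}\,\tilde\Phi_{\mu\nu}\right]\,.
\end{equation}
This cancellation is the crux of the argument: it only makes sense because the hypothesis (\ref{rel:hp}) already absorbs the exponentially growing spatial factor of $\tilde\calD(ik)$ into a Gaussian damping of $\tilde\Phi_{\mu\nu}$. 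Without this assumption, neither side would be a tempered distribution, as already emphasised in (\ref{e7}).

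Finally, I would identify the inverse Fourier transform of $e^{-\ell^2\omega^2}$ along the temporal frequency variable with the heat kernel $\calT_{(\ell)}(t)$ in (\ref{kernels}), using the classical Gaussian-into-Gaussian formula and the Fourier transform convention fixed in Appendix \ref{A:S}. Because $\calT_{(\ell)}$ is a Schwartz function, the convolution theorem applies in $\calS^\prime(\RR^4)$ to the product $e^{-\ell^2\omega^2}\,\tilde\Phi_{\mu\nu}$, yielding
\begin{equation}
\calF^{-1}\!\left[e^{-\ell^2\omega^2}\,\tilde\Phi_{\mu\nu}\right] \,=\, \calT_{(\ell)}\ast\Phi_{\mu\nu}\,,
\end{equation}
where $\ast$ now denotes a partial convolution in the time variable. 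Chaining the equalities gives the desired identity $a(\Box)\,h_{\mu\nu} = \calT_{(\ell)}\ast\Phi_{\mu\nu}$.

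The main obstacle I anticipate is not the algebraic manipulation but the careful bookkeeping of the distributional setting: one must check that (i) the partial convolutions in space and in time are well defined because $\calE_{(\ell)}$ and $\calT_{(\ell)}$ are Schwartz, (ii) the cancellation $e^{\ell^2\mathbf{k}^2}\cdot e^{-\ell^2\mathbf{k}^2} = 1$ is legitimate as multiplication of a smooth function by a tempered distribution (it is, because the product is performed on $\tilde\Phi_{\mu\nu}\in\calS^\prime$ after cancellation, never individually on $\tilde h_{\mu\nu}$ or $e^{\ell^2\mathbf{k}^2}$), and (iii) the convolution theorem in $\calS^\prime(\RR^4)$ applies in the partial-variable form employed above. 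Each of these points can be justified by invoking the standard results on tempered distributions recalled in the appendix, so no new analytical machinery is required.
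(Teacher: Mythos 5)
Your proposal is correct and follows essentially the same route as the paper's proof: Fourier-transform the hypothesis (\ref{rel:hp}) to get $\tilde h_{\mu\nu}=e^{-\ell^2\mathbf{k}^2}\tilde\Phi_{\mu\nu}$, substitute into Proposition \ref{def:nonoperator} so that the growing spatial factor of $\tilde{\calD}(ik)$ cancels, and then invert $e^{-\ell^2\omega^2}\tilde\Phi_{\mu\nu}$ via the convolution theorem (Proposition \ref{FourierConv}) to obtain $\calT_{(\ell)}\ast\Phi_{\mu\nu}$. Your extra attention to the partial-convolution bookkeeping (spatial variables for $\calE_{(\ell)}$, temporal for $\calT_{(\ell)}$) is a point the paper leaves implicit, but it does not change the argument.
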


\begin{proof}
According to our assumption, condition (\ref{rel:hp}) must be satisfied so that we can represent the non-local operator in convolutional form. Therefore, applying the Fourier transform to (\ref{rel:hp}) in $\mathcal{S}^\prime$--- having in mind proposition (\ref{FourierConv}), remark (\ref{A2}) and the fact that $\Phi_{\mu\nu}\in\calS^\prime$ ---, we obtain 
\begin{equation}
\label{demrelhp}
\tilde h_{\mu\nu} = e^{-\ell^2 \mathbf{k}^2}\,\tilde\Phi_{\mu\nu}\:.
\end{equation}
Now, using proposition (\ref{def:nonoperator}), definition (\ref{a}) and equation (\ref{demrelhp}), we get
\begin{equation}
\label{proofconv1}
\left(a(\Box)h_{\mu\nu}, \varphi\right) = \left( \calF^{-1}\left[e^{-\ell^2\omega^2}\tilde\Phi_{\mu\nu}\right],\varphi\right) \qquad \varphi\in\calS(\RR^4)\:.
\end{equation}
By means of proposition (\ref{FourierConv}), the following relation in $\calS^\prime$ holds
\begin{equation}
\label{convrelationS}
f\ast g = \calF^{-1}\left[\calF\left[f\right]\cdot\calF\left[g\right]\right] \qquad \mathrm{with}\qquad  f\in\calS^\prime \quad g\in\calS\:.
\end{equation}
Consequently, because $e^{-\ell^2\omega^2}\in\calS(\RR)$ and $\tilde\Phi_{\mu\nu}\in\calS^\prime(\RR^4)$, we can relate them with (\ref{convrelationS}), namely,
\begin{equation}
\calF\left[f\right] := \tilde\Phi_{\mu\nu} \qquad \mathrm{and} \qquad \calF\left[g\right](\omega) : = e^{-\ell^2\omega^2}\:.
\end{equation}
Applying the inverse Fourier transform in $\calS^\prime$ to each of them, we obtain that 
\begin{equation}
\label{fg}
f := \Phi_{\mu\nu}  \qquad \mathrm{and} \qquad g (t) : = \calT_{(\ell)}  (t)
\end{equation}
where $\calT_{(\ell)}(t)$ is (\ref{kernels}). Finally, using (\ref{fg}) and (\ref{convrelationS}),  equation (\ref{proofconv1}) becomes
\begin{equation}
a(\Box)h_{\mu\nu} = \left(\calT_{(\ell)}\ast\Phi_{\mu\nu}\right) \quad \forall \varphi\:.
\end{equation}
\end{proof}

The $a(\Box) h_{\mu\nu}$ is correctly defined in $\mathcal{S}^\prime(\mathbb{R}^4)$ as long as condition (\ref{rel:hp}) is satisfied. On the other hand, $\Phi_{\mu\nu}$ can be treated as a locally integrable slow-growing function since, in proposition (\ref{def:noperator}), $\Phi_{\mu\nu}$ is considered as a regular tempered distribution\footnote{We would like to point out that proposition (\ref{def:noperator}) can be also extended to singular tempered distribution. However, we again impose that they are regular distributions since, in Section (\ref{sectionIV}), we will need to treat them as functions.}. As a consequence of this fact,  the kinematic space on which the non-local operator acts is reduced as follows
\begin{equation}
\label{kspace}
\calK =\left\{ h_{\mu\nu}(x)\in\calC^\infty(\RR^4) \quad | \quad h_{\mu\nu} (x) = \left(\calE_{(\ell)}\ast\Phi_{\mu\nu}\right)(x)  \right\}\:.
\end{equation}
\noindent
On the other hand, by using remark (\ref{A2}), we can assert that $\mathcal{E}_{(\ell)}(\mathbf{x})\in\mathcal{S}(\mathbb{R}^3)$. Therefore, the right-hand side of (\ref{rel:hp}) is the value of a regular distribution $\Phi_{\mu\nu}(t,\mathbf{x}^\prime)$ acting on the test function $\mathcal{E}_{(\ell)}(\mathbf{x}-\mathbf{x}^\prime)$. Consequently, the rigorous meaning of (\ref{rel:hp}) in the algebra $\mathcal{S}^\prime$ is \cite{Vladimirov}:
\begin{equation}
\label{eqhSp}
h_{\mu\nu}(t,\mathbf{x}) = \left(\Phi_{\mu\nu}(t,\mathbf{x}^\prime),\mathcal{E}_{(\ell)}(\mathbf{x}-\mathbf{x}^\prime)\right)\;.
\end{equation} 

\begin{proposition}
\label{bound}
Let $s\in\mathbb{N}$. For the static case, the tensor field $h_{\mu\nu}(\mathbf{x})$ is bounded by
\begin{equation}
|h_{\mu\nu}(\mathbf{x})| \leq C_{\mu\nu} (1 + |\mathbf{x}|^s)\;.
\end{equation}
where $C_{\mu\nu}$ is a constant tensor.
\end{proposition}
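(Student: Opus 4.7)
The plan is to use the integral representation of $h_{\mu\nu}$ afforded by proposition (\ref{def:noperator}) together with the slow-growth characterization of regular tempered distributions. In the static case, equation (\ref{eqhSp}) reduces to
\begin{equation}
h_{\mu\nu}(\mathbf{x}) = \left(\Phi_{\mu\nu}(\mathbf{x}'),\calE_{(\ell)}(\mathbf{x}-\mathbf{x}')\right) = \int_{\RR^3}\D\mathbf{x}'\,\calE_{(\ell)}(\mathbf{x}-\mathbf{x}')\,\Phi_{\mu\nu}(\mathbf{x}')\,,
\end{equation}
where the second equality uses that $\Phi_{\mu\nu}$ is assumed to be a \emph{regular} tempered distribution, i.e.\ a locally integrable slow-growing function. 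The general structure theorem for such distributions yields constants $A_{\mu\nu}>0$ and an integer $m\in\mathbb{N}$ (depending on $\Phi_{\mu\nu}$) such that $|\Phi_{\mu\nu}(\mathbf{x}')|\leq A_{\mu\nu}(1+|\mathbf{x}'|^m)$ almost everywhere.

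Next, I would estimate the absolute value of the convolution by bringing the modulus inside the integral and applying the polynomial bound on $\Phi_{\mu\nu}$. After the translation $\mathbf{y}=\mathbf{x}'-\mathbf{x}$ and using that $\calE_{(\ell)}$ is even, this yields
\begin{equation}
|h_{\mu\nu}(\mathbf{x})|\leq A_{\mu\nu}\int_{\RR^3}\D\mathbf{y}\,\calE_{(\ell)}(\mathbf{y})\left(1+|\mathbf{x}+\mathbf{y}|^m\right).
\end{equation}
The elementary inequality $|\mathbf{x}+\mathbf{y}|^m\leq 2^{m-1}\bigl(|\mathbf{x}|^m+|\mathbf{y}|^m\bigr)$ then splits the integrand into a piece proportional to $1+|\mathbf{x}|^m$ (multiplied by $\int\calE_{(\ell)}=1$) and a piece involving the $m$-th absolute moment of the Gaussian $\calE_{(\ell)}$.

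The crucial point that makes the proof close is that $\calE_{(\ell)}\in\calS(\RR^3)$, so \emph{all} of its absolute moments are finite: $I_m:=\int_{\RR^3}\D\mathbf{y}\,|\mathbf{y}|^m\calE_{(\ell)}(\mathbf{y})<\infty$, and in fact is computable in closed form in terms of $\ell$ and $\Gamma$-functions. Collecting the two terms,
\begin{equation}
|h_{\mu\nu}(\mathbf{x})|\leq A_{\mu\nu}\bigl(1+2^{m-1}I_m\bigr)+A_{\mu\nu}\,2^{m-1}|\mathbf{x}|^m \leq C_{\mu\nu}\bigl(1+|\mathbf{x}|^s\bigr)
\end{equation}
with $s:=m$ and $C_{\mu\nu}:=A_{\mu\nu}\max\{1+2^{m-1}I_m,\,2^{m-1}\}$, which is the asserted bound.

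The only real obstacle is the invocation of the slow-growth representation of $\Phi_{\mu\nu}$ as an actual function satisfying a polynomial pointwise bound; once we have already restricted attention to \emph{regular} tempered distributions (as stated in proposition (\ref{def:noperator})), this is standard and the rest is the Gaussian moment estimate sketched above. If one instead wanted to allow singular $\Phi_{\mu\nu}$, the argument would have to be rephrased using continuity of the distribution against the Schwartz seminorms of $\calE_{(\ell)}(\mathbf{x}-\cdot)$ and control how those seminorms grow in $\mathbf{x}$, but that refinement is not needed here.
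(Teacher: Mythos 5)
There is a genuine gap at the step where you assert that a regular tempered distribution $\Phi_{\mu\nu}$ satisfies a pointwise bound $|\Phi_{\mu\nu}(\mathbf{x}')|\leq A_{\mu\nu}(1+|\mathbf{x}'|^m)$ almost everywhere. That is not what ``locally integrable function of slow growth'' means in this paper (proposition (\ref{defint})): the defining property is the \emph{integrated} condition $\int_{\RR^3}|\Phi_{\mu\nu}(\mathbf{x}')|(1+|\mathbf{x}'|^2)^{-m/2}\,\D\mathbf{x}'<\infty$, which does not imply any pointwise polynomial bound. The paper's own example in Section (\ref{sectionIV}), $\Phi(\mathbf{x})=1/|\mathbf{x}|^2$ in $\RR^3$, is a regular tempered distribution that is not bounded by any polynomial near the origin; no structure theorem rescues a pointwise a.e.\ bound here (the standard structure theorem only represents the distribution as derivatives of polynomially bounded continuous functions, which is a different statement). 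Since your entire estimate is launched from that pointwise bound, the proof as written does not close. The Gaussian-moment computation that follows is fine in itself, but it rests on a false premise.

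The fix is precisely the ``refinement'' you dismiss in your last paragraph, and it is what the paper actually does: bound $|h_{\mu\nu}(\mathbf{x})|=|(\Phi_{\mu\nu},\calE_{(\ell)}(\mathbf{x}-\cdot))|\leq C\,\|\calE_{(\ell)}(\mathbf{x}-\cdot)\|_s$ via the Schwartz continuity estimate, and then show that the seminorm of the translated Gaussian grows at most like $1+|\mathbf{x}|^s$ (using $|\mathbf{x}'|^2\leq|\mathbf{x}-\mathbf{x}'|^2+2|\mathbf{x}||\mathbf{x}-\mathbf{x}'|+|\mathbf{x}|^2$ and absorbing the polynomial-times-Gaussian factors into a constant). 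Alternatively, if you want to stay with the explicit integral, insert $(1+|\mathbf{x}'|^2)^{-m/2}(1+|\mathbf{x}'|^2)^{m/2}$ into the integrand, pull out $\sup_{\mathbf{x}'}(1+|\mathbf{x}'|^2)^{m/2}\calE_{(\ell)}(\mathbf{x}-\mathbf{x}')\leq C(1+|\mathbf{x}|^m)$ by the same translation trick, and use the weighted-$L^1$ condition for the remaining integral. Either route gives the stated bound; yours, as written, does not.
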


\begin{proof}
Let $D^\alpha$ be the derivative of order $\alpha$ in the multi-index notation and $P_{(\alpha)}(\mathbf{x})$ a polynomial of degree $\alpha$. By invoking Schwartz's theorem \cite{Vladimirov}, we get
\begin{align}
|h_{\mu\nu}(\mathbf{x})| &\leq C_{(1)\mu\nu} ||\mathcal{E}_{(\ell)}(\mathbf{x}-\mathbf{x}^\prime)||_s = C_{(1)\mu\nu} \underset{\mathbf{x}^\prime\in\mathbb{R}^3; |\alpha|\leq s}{\mathrm{sup}} (1+ |\mathbf{x}^\prime|^2)^{\frac{s}{2}} \left|D^\alpha_{\mathbf{x}^\prime} e^{-\frac{|\mathbf{x}-\mathbf{x}^\prime|^2}{4\ell^2}}\right|\nonumber\\
& \leq C_{(1)\mu\nu} \underset{\mathbf{x}^\prime\in\mathbb{R}^3; |\alpha|\leq s}{\mathrm{sup}} (1+ |\mathbf{x}^\prime|^2)^{\frac{s}{2}} \left|P_{(\alpha)}(\mathbf{x}-\mathbf{x}^\prime)\right| \left|e^{-\frac{|\mathbf{x}-\mathbf{x}^\prime|^2}{4\ell^2}}\right|\nonumber\\
& =  C_{(1)\mu\nu} \underset{\mathbf{x}^\prime\in\mathbb{R}^3; |\alpha|\leq s}{\mathrm{sup}} (1+ |\mathbf{x}^\prime + \mathbf{x} - \mathbf{x}|^2)^{\frac{s}{2}} \left|P_{(\alpha)}(\mathbf{x}-\mathbf{x}^\prime)\right| \left|e^{-\frac{|\mathbf{x}-\mathbf{x}^\prime|^2}{4\ell^2}}\right|\nonumber\\
& \leq C_{(1)\mu\nu} \underset{\mathbf{x}^\prime\in\mathbb{R}^3; |\alpha|\leq s}{\mathrm{sup}} (1+ |\mathbf{x} - \mathbf{x}^\prime|^2 + |\mathbf{x}|^2)^{\frac{s}{2}} \left|P_{(\alpha)}(\mathbf{x}-\mathbf{x}^\prime)\right| \left|e^{-\frac{|\mathbf{x}-\mathbf{x}^\prime|^2}{4\ell^2}}\right|\nonumber\\
&\leq C_{(2)\mu\nu} \underset{\mathbf{x}^\prime\in\mathbb{R}^3; |\alpha|\leq s}{\mathrm{sup}} \left[(1+ |\mathbf{x} - \mathbf{x}^\prime|^2)^{\frac{s}{2}} + |\mathbf{x}|^s\right] \left|P_{(\alpha)}(\mathbf{x}-\mathbf{x}^\prime)\right| \left|e^{-\frac{|\mathbf{x}-\mathbf{x}^\prime|^2}{4\ell^2}}\right|\:.
\end{align}
We can always use that
\begin{equation}
P_{(\alpha)}(\mathbf{x}-\mathbf{x}^\prime) \leq C_3 (1 + |\mathbf{x}-\mathbf{x}^\prime|^2)^{\frac{s}{2}} 
\end{equation}
with $C_{3} \in \mathbb{R}^{+}$ and $s/2 > \alpha$. Therefore, 
\begin{align}
|h_{\mu\nu}(\mathbf{x})| &\leq  C_{(4)\mu\nu} \underset{\mathbf{x}^\prime\in\mathbb{R}^3; |\alpha|\leq s}{\mathrm{sup}} \left[(1+ |\mathbf{x} - \mathbf{x}^\prime|^2)^{\frac{s}{2}} + |\mathbf{x}|^s\right](1 + |\mathbf{x}-\mathbf{x}^\prime|^2)^{\frac{s}{2}} \left|e^{-\frac{|\mathbf{x}-\mathbf{x}^\prime|^2}{4\ell^2}}\right|\nonumber\\
&\leq  C_{(4)\mu\nu}  \underset{\mathbf{x}^\prime\in\mathbb{R}^3; |\alpha|\leq s}{\mathrm{sup}} (1+ |\mathbf{x} - \mathbf{x}^\prime|^2)^s + |\mathbf{x}|^s(1 + |\mathbf{x}-\mathbf{x}^\prime|^2)^{\frac{s}{2}} \left|e^{-\frac{|\mathbf{x}-\mathbf{x}^\prime|^2}{4\ell^2}}\right|\nonumber\\
&\leq  C_{(4)\mu\nu}  \underset{\mathbf{x}^\prime\in\mathbb{R}^3; |\alpha|\leq s}{\mathrm{sup}} (1+ |\mathbf{x} - \mathbf{x}^\prime|^2)^s + |\mathbf{x}|^s(1 + |\mathbf{x}-\mathbf{x}^\prime|^2)^s \left|e^{-\frac{|\mathbf{x}-\mathbf{x}^\prime|^2}{4\ell^2}}\right| \nonumber\\
&\leq C_{(4)\mu\nu} (1+ |\mathbf{x}|^s) \underset{\mathbf{x}^\prime\in\mathbb{R}^3; |\alpha|\leq s}{\mathrm{sup}}(1 + |\mathbf{x}-\mathbf{x}^\prime|^2)^{s} \left|e^{-\frac{|\mathbf{x}-\mathbf{x}^\prime|^2}{4\ell^2}}\right|\nonumber\\
& \leq C_{(5)\mu\nu} (1+ |\mathbf{x}|^s)
\end{align}
where we have used the same procedure as remark (\ref{A2}) in the last step.
\end{proof}

\begin{proposition}
\label{bound2}
Let $s\in\mathbb{N}$ and $D^\alpha$ be the derivative of order $\alpha$ in the multi-index notation. For the static case, the derivatives of the tensor field $h_{\mu\nu}(\mathbf{x})$ are bounded by
\begin{equation}
|D^\alpha h_{\mu\nu}(\mathbf{x})| \leq C_{\mu\nu} (1 + |\mathbf{x}|^s)\;.
\end{equation}
where $C_{\mu\nu}$ is a constant tensor.
\end{proposition}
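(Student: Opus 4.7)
The plan is to mimic Proposition \ref{bound} almost verbatim by transferring the derivatives off of $h_{\mu\nu}$ and onto the Gaussian kernel inside the distributional pairing (\ref{eqhSp}). The key observation is that $\mathcal{E}_{(\ell)}(\mathbf{x}-\mathbf{x}^\prime)$ is jointly smooth in $(\mathbf{x},\mathbf{x}^\prime)$ and Schwartz in $\mathbf{x}^\prime$ for each fixed $\mathbf{x}$, so nothing about the regularity or decay is lost when we differentiate in $\mathbf{x}$.

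First I would invoke the standard result that a regular tempered distribution paired against a parameter-dependent test function can be differentiated under the pairing, giving
\begin{equation*}
D^\alpha h_{\mu\nu}(\mathbf{x}) = \left(\Phi_{\mu\nu}(\mathbf{x}^\prime),\,D^\alpha_{\mathbf{x}}\mathcal{E}_{(\ell)}(\mathbf{x}-\mathbf{x}^\prime)\right).
\end{equation*}
Since $\mathcal{E}_{(\ell)}$ depends only on the difference, $D^\alpha_{\mathbf{x}}\mathcal{E}_{(\ell)}(\mathbf{x}-\mathbf{x}^\prime) = (-1)^{|\alpha|}D^\alpha_{\mathbf{x}^\prime}\mathcal{E}_{(\ell)}(\mathbf{x}-\mathbf{x}^\prime)$, and a direct computation shows this equals $Q_{(\alpha)}(\mathbf{x}-\mathbf{x}^\prime)\mathcal{E}_{(\ell)}(\mathbf{x}-\mathbf{x}^\prime)$ for a polynomial $Q_{(\alpha)}$ of degree $|\alpha|$ (a Hermite-type factor). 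Thus the test function against which $\Phi_{\mu\nu}$ is paired is again a polynomial times the original Gaussian.

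Second, I would apply Schwartz's theorem just as in Proposition \ref{bound}, obtaining
\begin{equation*}
|D^\alpha h_{\mu\nu}(\mathbf{x})| \leq C_{(1)\mu\nu}\underset{\mathbf{x}^\prime\in\mathbb{R}^3;\,|\beta|\leq s}{\mathrm{sup}}(1+|\mathbf{x}^\prime|^2)^{s/2}\left|D^\beta_{\mathbf{x}^\prime}\left[Q_{(\alpha)}(\mathbf{x}-\mathbf{x}^\prime)\mathcal{E}_{(\ell)}(\mathbf{x}-\mathbf{x}^\prime)\right]\right|.
\end{equation*}
Leibniz's rule turns the inner derivative into another polynomial in $(\mathbf{x}-\mathbf{x}^\prime)$ times the same Gaussian. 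From here the argument is identical to Proposition \ref{bound}: use $(1+|\mathbf{x}^\prime|^2)^{s/2} \leq C\bigl[(1+|\mathbf{x}-\mathbf{x}^\prime|^2)^{s/2}+|\mathbf{x}|^s\bigr]$ to split off the factor $(1+|\mathbf{x}|^s)$, absorb the remaining polynomial in $(\mathbf{x}-\mathbf{x}^\prime)$ into a higher power $(1+|\mathbf{x}-\mathbf{x}^\prime|^2)^{s}$ (choosing $s$ large enough for the polynomial degree produced by $Q_{(\alpha)}$ and the derivatives), and finally bound polynomial times Gaussian by a constant via the same argument used in remark (\ref{A2}).

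The only genuine subtlety, rather than obstacle, is the very first step: justifying that $D^\alpha$ commutes with the $\mathcal{S}^\prime$-pairing when $\mathbf{x}$ is a parameter. This is standard given that $\mathcal{E}_{(\ell)}(\mathbf{x}-\mathbf{x}^\prime)$ and all its $\mathbf{x}$-derivatives remain in $\mathcal{S}(\mathbb{R}^3_{\mathbf{x}^\prime})$ with seminorms depending continuously on $\mathbf{x}$, so the difference-quotient limit defining $D^\alpha$ converges in the topology of $\mathcal{S}(\mathbb{R}^3)$ and hence the continuous distribution $\Phi_{\mu\nu}$ passes through. Once this is established, the remainder is purely a repetition of the chain of inequalities already carried out in Proposition \ref{bound}, only with the polynomial degree bookkeeping enlarged to accommodate $|\alpha|$.
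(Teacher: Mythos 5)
Your proposal is correct and follows essentially the same route as the paper's proof: apply Schwartz's theorem to the pairing (\ref{eqhSp}) with the $\mathbf{x}$-derivatives transferred onto the Gaussian kernel, separate the $\mathbf{x}$-dependence from the $(\mathbf{x}-\mathbf{x}^\prime)$-dependence, and bound the remaining polynomial-times-Gaussian supremum as in remark (\ref{A2}). The only differences are bookkeeping (the paper substitutes $\mathbf{y}=\mathbf{x}-\mathbf{x}^\prime$ and uses the multiplicative splitting $(1+|\mathbf{x}|^2+|\mathbf{y}|^2)^{s/2}\leq(1+|\mathbf{x}|^2)^{s/2}(1+|\mathbf{y}|^2)^{s/2}$ rather than your additive one) and your explicit justification of differentiating under the distributional pairing, which the paper leaves implicit.
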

\begin{proof}
Invoking Schwartz's theorem \cite{Vladimirov}, we get
\begin{align}
|D^\alpha h_{\mu\nu}(\mathbf{x})| &\leq C_{(1)\mu\nu} ||D^\alpha_{\mathbf{x}} \mathcal{E}_{(\ell)}(\mathbf{x}-\mathbf{x}^\prime)||_s = C_{(1)\mu\nu} \underset{\mathbf{x}^\prime\in\mathbb{R}^3; |\beta|\leq s}{\mathrm{sup}} \left(1+ |\mathbf{x}^\prime|^2 \right)^\frac{s}{2}\left|D^\alpha_\mathbf{x}D^\beta_{\mathbf{x}^\prime} \calE_{(\ell)}(\mathbf{x}-\mathbf{x}^\prime) \right|\nonumber\\
& = C_{(1)\mu\nu} \underset{\mathbf{y}\in\mathbb{R}^3; |\beta|\leq s}{\mathrm{sup}} \left(1+ |\mathbf{x}- \mathbf{y}|^2 \right)^\frac{s}{2}\left|D^{\alpha+\beta} \calE_{(\ell)}(\mathbf{y}) \right|\nonumber\\
& \leq C_{(1)\mu\nu} \underset{\mathbf{y}\in\mathbb{R}^3; |\beta|\leq s}{\mathrm{sup}} \left(1+ |\mathbf{x}|^2 + |\mathbf{y}|^2 \right)^\frac{s}{2}\left|D^{\alpha+\beta} \calE_{(\ell)}(\mathbf{y}) \right|\nonumber\\
& \leq  C_{(1)\mu\nu} \underset{\mathbf{y}\in\mathbb{R}^3; |\beta|\leq s}{\mathrm{sup}} \left(1+ |\mathbf{x}|^2\right)^{\frac{s}{2}} \left(1+ |\mathbf{y}|^2 \right)^\frac{s}{2}\left|D^{\alpha+\beta} \calE_{(\ell)}(\mathbf{y}) \right|\nonumber\\
&\leq  C_{(1)\mu\nu} \left(1+ |\mathbf{x}|^2\right)^{\frac{s}{2}} \underset{\mathbf{y}\in\mathbb{R}^3; |\sigma|\leq s+|\alpha|}{\mathrm{sup}} (1+|\mathbf{y}|^2)^{\frac{s+|\alpha|}{2}} \left| D^\sigma\calE_{(\ell)}(\mathbf{y})\right|\nonumber\\
& \leq C_{(2)\mu\nu} \left(1+ |\mathbf{x}|^s\right)
\end{align}
where we have used the same procedure as remark (\ref{A2}) in the last step.
\end{proof}

Let us emphasize the relevance of the last two propositions. As we have just shown, the tensor field $h_{\mu\nu}(\mathbf{x})$ and its derivatives are bounded by a polynomial, i.e., they are regular in all $\RR^3$. This fact implies that, under the definition of the non-local operator via the inverse Fourier transform in the space of tempered distributions and the consequences it has ---equation (\ref{rel:hp})---, one can avoid the problem of spacetime singularities for the approximation where the field is static. As is well known, if the second derivatives of $h_{\mu\nu}(\mathbf{x})$ are regular, the Riemann tensor will be finite everywhere and, consequently, the scalar curvature invariants will be finite as well.

\vspace{0.1cm}
On the other hand, it is necessary to mention that the constraint on the kinematic space is given by the particular case in which the non-local operator is defined as (\ref{a}). It could be the case that, by choosing another definition for $a(\Box)$, the subset of functions for which this operator is well-defined changes. For instance, $a(\Box) := \mathrm{exp}(\ell^2\Box)$. For this particular case, note that the ``problematic" part would be given by the factor $e^{\ell^2\omega^2}$, since it would grow exponentially at infinity. Therefore, a tempered distribution $\tilde\Phi_{\mu\nu}$ would have to exist, such that $\tilde h_{\mu\nu} = e^{-\ell^2\omega^2}\tilde\Phi_{\mu\nu}$, so that $\tilde \calD(ik)\tilde h_{\mu\nu}$ is a tempered distribution. This suggests that the kinematic space might be different for each non-local operator\footnote{For the case where $a(\Box):= \sin(\Box)$ something similar happens. This non-local operator in Fourier space is $\tilde\calD(ik) = \cos(\mathbf{k}^2)\sin(\omega^2) - \sin(\mathbf{k}^2)\cos(\omega^2)$ which belongs $\calS^\prime$ as a regular distribution. Therefore, the factor $\tilde\calD(ik)\tilde  h_{\mu\nu}$ is itself already a tempered distribution without the need to introduce an auxiliary tempered distribution $\tilde\Phi_{\mu\nu}$. Thus, the kinematic space will be different from (\ref{kspace}). }.

\vspace{0.1cm}
In what follows, we will be more flexible in our notation when describing distributions. For example, we will denote the Dirac delta $\delta(x)$ instead of $(\delta,\varphi)$ with $\varphi\in\calS(\RR)$. The main reason for this choice is for ease of notation and readability in the following sections. However, recall that this notation describes the distribution acting on the variable $x$ of the test function $\varphi$.

\section{Non-local Lagrangian density}
\label{sectionIV}

We are in a position to transform the infinite-order Lagrangian (\ref{La}) into a non-local one. To do so, let us use proposition (\ref{def:noperator}) and condition (\ref{rel:hp}) but with the category of locally integrable functions of slow growth. After a bit of algebra, the Lagrangian density (\ref{La}) becomes
 \begin{equation}
 \label{LSp}
\mathcal{L}(\Phi_{\alpha\beta},x) = \frac{1}{4\kappa} M^{\mu\nu\alpha\beta\sigma\rho}\left(\calE_{(\ell)}\ast\Phi_{\mu \nu}\right)(x)\:\partial_\alpha\partial_\beta\left(\calT_{(\ell)}\ast\Phi_{\sigma\rho}\right)(x)
\end{equation}
where $x^\mu=(t,\mathbf{x})$,  and $M^{\mu\nu \alpha\beta\sigma f}$ is
\begin{equation}
M^{\mu\nu\alpha\beta\sigma\rho}:= \eta^{\mu \sigma}\eta^{\nu \rho}\eta^{\alpha\beta} - 2\eta^{\mu \alpha}\eta^{\beta \sigma}\eta^{\nu \rho} + 2\eta^{\mu \alpha}\eta^{\nu\beta}\eta^{\sigma\rho} - \eta^{\mu \nu}\eta^{\alpha\beta }\eta^{\sigma\rho}\:.
\end{equation}
Let us make three observations concerning this Lagrangian density.  Observe that the kinematic space $\calK$ is now coordinated in terms of $\Phi_{\mu\nu}(x)$, which do not necessarily belong to $C^\infty(\mathbb{R}^4)$. An illustrative example of the latter would be $f(\mathbf{x}) = \frac{1}{|\mathbf{x}|^2}$. This function is not infinitely differentiable at $|\mathbf{x}| = 0$, however, it is a locally integrable function of slow growth. Consequently, the space of functions considered is more extensive than the initial one.  The second observation, which is a consequence of the first one, is that it could provide solutions that may not belong to IDG. For this reason, the condition (\ref{rel:hp}) must always be satisfied to recover the IDG solutions.  The third observation is that the Lagrangian density is not manifestly covariant due to convolutions. 

\vspace{0.1cm}
\noindent
We define the Euler-Lagrange equations as
\begin{equation}
\label{eqdif}
\int_{\mathbb{R}^4} \D x\:\lambda_{\mu\nu}(\Phi_{\alpha\beta},x,y) = - \mathcal{J}_{\mu\nu}(y)
\end{equation}
where $\mathcal{J}_{\mu\nu}(y)$ is a source coming from a Lagrangian density of matter $\mathcal{L}_M(\Phi_{\alpha\beta})$.  Having in mind $y^\mu=(\tau,\mathbf{y})$ and using (\ref{lambda}), we get
\begin{equation}
\label{Lambda}
\begin{split}
\lambda_{\mu\nu}(\Phi_{\alpha\beta},x,y) = \frac{1}{4\kappa}&\left\{M_{\mu\nu}^{\;\;\;\;\alpha\beta\sigma\rho}\:\delta(t-\tau)\:\calE_{(\ell)}(\mathbf{x}-\mathbf{y})\:\partial_\alpha\partial_\beta\left(\calT_{(\ell)}\ast\Phi_{\sigma\rho}\right)(x) \right.\\
&\left. + M^{\sigma\rho\alpha\beta}_{\;\;\;\;\;\;\;\;\mu\nu}\left(\calE_{(\ell)}\ast\Phi_{\sigma\rho}\right)(x)\:\partial_\alpha\partial_\beta\left[\calT_{(\ell)}(t-\tau)\:\delta(\mathbf{x}-\mathbf{y})\right]\right\}\:.
\end{split}
\end{equation}
Therefore, the equations of motion are
\begin{equation}
\label{IntEOM2}
\left(M_{\mu\nu}^{\;\;\;\;cd\sigma\rho} + M^{\sigma\rho\alpha\beta}_{\;\;\;\;\;\;\;\mu\nu} \right)\partial_\alpha\partial_\beta\left( \calTE_{(\ell)}\ast\Phi_{\sigma\rho}\right)(y) = -4\kappa\:\mathcal{J}_{\mu\nu}(y)
\end{equation}
where $\calTE_{(\ell)}(y):=  \calT_{(\ell)}(t) \calE_{(\ell)}(\mathbf{y})$ for ease of notation. Recall that coordinating the kinematic space with the functions $\Phi_{\mu\nu}(x)$ implies that the  trace-reversed function $\hat{h}_{\mu\nu}(x)$ is modified as follows 
\begin{equation}
\label{Ptrans}
\hat{\Phi}_{\mu\nu}(x) : = \hat{h}_{\mu\nu} (x) = \left(\calE_{(\ell)}\ast\Phi_{\mu\nu}\right)(x) - \frac{1}{2} \eta_{\mu\nu}\left(\calE_{(\ell)}\ast \Phi\right)(x)
\end{equation}
where $\Phi(x)= \eta^{\mu\nu}\Phi_{\mu\nu}(x)$. Consequently, the gauge condition is
\begin{equation}
\label{gcond}
\partial_{\mu}\hat\Phi^{\mu\nu}(x) :=\partial_\mu \hat h^{\mu\nu} (x)= \left(\partial^\alpha \eta^{\beta\nu} - \frac{1}{2}\partial^\nu \eta^{\alpha\beta}\right)\left(\calE_{(\ell)}\ast\Phi_{\alpha\beta}\right)(x) = 0\:.
\end{equation}
Noticing that $\hat \Phi := \hat h = - \left(\calE_{(\ell)}\ast\Phi\right)(x)$ and using (\ref{Ptrans}) and (\ref{gcond}), the equations of motion simplify as follows
\begin{equation}
\label{IntEOM}
\Box \left(\calT_{(\ell)}\ast\hat\Phi_{\mu\nu}\right)(y) = - 2\kappa\:\calJ_{\mu\nu}(y)\:.
\end{equation} 

\begin{proposition}
\label{rel}
The relation between the Hilbert energy-momentum tensor for matter $T_{\mu\nu}(x)$ and the source $\mathcal{J}_{\mu\nu}(x)$ is 
\begin{equation}
\mathcal{J}_{\mu\nu}(x) = \left(\calE_{(\ell)}\ast T_{\mu\nu}\right)(x)\:.
\end{equation}
\end{proposition}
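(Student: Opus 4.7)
My approach is to compare the two equivalent forms of the equation of motion --- the original infinite-derivative EOM (\ref{EOM}) and its reparameterized convolutional counterpart (\ref{IntEOM}) --- and to read off the identification of their sources. The underlying physical picture is that both equations must arise from the same matter coupling once $h_{\mu\nu}$ is re-expressed through $\Phi_{\mu\nu}$ via (\ref{rel:hp}); convolving the original EOM with the spatial kernel $\calE_{(\ell)}$ should then reproduce (\ref{IntEOM}) precisely when $\calJ_{\mu\nu} = \calE_{(\ell)} \ast T_{\mu\nu}$.

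The first step is to unfold the definition (\ref{Ptrans}). By linearity of the trace reversal together with $h_{\mu\nu} = \calE_{(\ell)} \ast \Phi_{\mu\nu}$, one has $\hat\Phi_{\mu\nu} = \hat h_{\mu\nu} = \calE_{(\ell)} \ast \bigl(\Phi_{\mu\nu} - \tfrac12 \eta_{\mu\nu}\Phi\bigr)$. Proposition \ref{def:noperator} then applies with $\Phi_{\mu\nu} - \tfrac12\eta_{\mu\nu}\Phi$ playing the role of the auxiliary regular distribution, yielding $a(\Box)\hat h_{\mu\nu} = \calT_{(\ell)} \ast \bigl(\Phi_{\mu\nu} - \tfrac12\eta_{\mu\nu}\Phi\bigr)$.

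The second step exploits associativity and commutativity of convolution:
\begin{equation}
\calT_{(\ell)} \ast \hat\Phi_{\mu\nu} = \calT_{(\ell)} \ast \calE_{(\ell)} \ast \bigl(\Phi_{\mu\nu} - \tfrac12\eta_{\mu\nu}\Phi\bigr) = \calE_{(\ell)} \ast \bigl[a(\Box)\hat h_{\mu\nu}\bigr].
\end{equation}
Acting with $\Box$ on both sides, commuting it past the convolution with $\calE_{(\ell)}$, and inserting the original EOM (\ref{EOM}) then gives $\Box(\calT_{(\ell)} \ast \hat\Phi_{\mu\nu}) = -2\kappa\, \calE_{(\ell)} \ast T_{\mu\nu}$. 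Matching the left-hand side against (\ref{IntEOM}) produces the desired identity $\calJ_{\mu\nu}(x) = (\calE_{(\ell)} \ast T_{\mu\nu})(x)$.

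The only delicate point is justifying the commutation of $\Box$, $\calT_{(\ell)}\ast$, and $\calE_{(\ell)}\ast$ within $\calS^\prime(\RR^4)$; this is already used implicitly in Proposition \ref{def:noperator} and is harmless because $\calE_{(\ell)} \in \calS(\RR^3)$ and $\calT_{(\ell)} \in \calS(\RR)$, so all the convolutions and derivative interchanges are admissible tempered-distributional operations. As a consistency check I would also compute $\calJ_{\mu\nu}$ directly from the chain rule for functional derivatives: differentiating $\mathcal{L}_M(\Phi_{\alpha\beta}, x) = \mathcal{L}_M(h_{\alpha\beta}[\Phi], x)$ through $\delta h^{\alpha\beta}(z)/\delta \Phi^{\mu\nu}(y) = \delta^\alpha_\mu \delta^\beta_\nu \delta(t_z - \tau)\calE_{(\ell)}(\mathbf{z} - \mathbf{y})$ and integrating against the definition (\ref{T}) of $T_{\mu\nu}$ collapses the time delta and reproduces the spatial convolution $\calE_{(\ell)} \ast T_{\mu\nu}$ in one line.
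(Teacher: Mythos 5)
Your argument is correct, but your primary route differs from the paper's. The paper proves the identity in a single step via the functional chain rule: it writes $\int\D x\,\delta\mathcal{L}(\Phi_{\alpha\beta},x)/\delta\Phi^{\mu\nu}(y)=\int\D x\,\D z\,\bigl(\delta h_{\sigma\rho}(z)/\delta\Phi^{\mu\nu}(y)\bigr)\bigl(\delta\mathcal{L}(h_{\alpha\beta},x)/\delta h^{\sigma\rho}(z)\bigr)$, notes from (\ref{rel:hp}) that the Jacobian kernel is $\delta(\xi-\tau)\,\calE_{(\ell)}(\mathbf{z}-\mathbf{y})$ (times the identity on symmetric tensors), collapses the time delta, and recognizes the remaining spatial integral as $(\calE_{(\ell)}\ast T_{\mu\nu})(y)$ by (\ref{T}). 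This is exactly the computation you relegate to a ``consistency check'' in your final sentence --- in the paper it is the entire proof. Your main route instead goes through the gravitational sector: you verify that $\calT_{(\ell)}\ast\hat\Phi_{\mu\nu}=\calE_{(\ell)}\ast\bigl[a(\Box)\hat h_{\mu\nu}\bigr]$, apply $\Box$, insert (\ref{EOM}), and match against (\ref{IntEOM}). The computation is sound (the convolution rearrangements are harmless because $\calT_{(\ell)}$ and $\calE_{(\ell)}$ are Schwartz in their respective variables), but note what it presupposes: that (\ref{EOM}) and (\ref{IntEOM}) are the Euler--Lagrange equations of one and the same action in the two coordinatizations and hold simultaneously, so that equality of the left-hand sides forces equality of the sources. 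That equivalence is itself underwritten by the same chain rule, so the paper's computation is the more primitive one --- it establishes $\calJ_{\mu\nu}=\calE_{(\ell)}\ast T_{\mu\nu}$ at the level of definitions, independently of the field equations --- whereas your matching argument is a valid derivation given that both field equations have been derived independently, and it has the virtue of making the mutual consistency of the two formulations explicit.
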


\begin{proof}
We know that the relation between $h_{\mu\nu}(x)$ and $\Phi_{\mu\nu}(x)$ is given by (\ref{rel:hp}). Thus, having in mind (\ref{T}), $y^\mu = (\tau,\mathbf{y})$ and $z^\mu = (\xi,\mathbf{z})$,
\begin{equation}
\begin{split}
\int_{\mathbb{R}^4} \D x \frac{\delta\mathcal{L}(\Phi_{\alpha\beta},x)}{\delta \Phi^{\mu\nu}(y)} &= \int_{\mathbb{R}^8} \D x \D z \frac{\delta h_{\sigma\rho}(z)}{\delta \Phi^{\mu\nu}(y)}\frac{\delta\mathcal{L}(h_{\alpha\beta},x)}{\delta h^{\sigma\rho}(z)}\\
& = - \int_{\mathbb{R}^3} \D \mathbf{z} \:\mathcal{E}_{(\ell)}(\mathbf{y}-\mathbf{z})\:T_{\mu\nu}(\tau,\mathbf{z})\:,
\end{split}
\end{equation}
and therefore,
\begin{equation}
\mathcal{J}_{\mu\nu}(x) = \left(\calE_{(\ell)}\ast T_{\mu\nu}\right)(x)\:.
\end{equation}
\end{proof}
According to \cite{Vladimirov}, the solution $\hat\Phi_{\mu\nu}(x)$ of the integro-differential equation (\ref{IntEOM}) might be expressed as,
\begin{equation}
\label{soldif}
\hat \Phi_{\mu\nu}(y) = 2\kappa \:(G\ast \mathcal{J}_{\mu\nu}) (y)
\end{equation}
where the kernel $G$ is the fundamental solution of the integrodifferential operator $\Box(\calT_{(\ell)}\ast -)$ belonging to $\mathcal{S}^\prime$ that satisfies
\begin{equation}
\label{GM}
\Box (\calT_{(\ell)}\ast G)(y) = - \delta(y)\:.
\end{equation}
This kernel is commonly known as the Green function in physics \cite{boos2020}. This solution exists as long as $G\ast\mathcal{J}$ and $\calT\ast G\ast\mathcal{J}$ exist\footnote{For the reader who is more familiar with these concepts, the theorem is based on the space in the distributions that acts on compact support functions, namely, on $D^\prime$. Nevertheless, we know that $\mathcal{S}^\prime\subset D^\prime$, so if it is true in $D^\prime$ it will also be true of $\mathcal{S}^\prime$. See more details in \cite{Vladimirov}.} in $\mathcal{S}^\prime$. Recall that the fundamental solution $G$ is not unique, since we can always add a solution $G_0$ such that it satisfies the homogeneous integro-differential equation $\Box\left(\calT\ast G_0\right) = 0$. Thus, the solution is unique as long as $G_0$ is determined.

\section{Euler-Lagrange's solutions}
\label{sectionV}

In order to obtain the complete solution, we need to find the fundamental solution of the integro-differential operator  $\Box (\calT_{(\ell)}\ast -)$ in $\calS^\prime$. For that, we shall previously determine the homogeneous $G_0$ and the inhomogeneous $G_{I}$ of the integro-differential equation (\ref{GM}). The mathematical tool to find them will be the Fourier transform in $\calS^\prime$.

\subsection{Solution of the homogeneous integro-differential  equation}
To find the solution for the homogeneous integro-differential  equation (\ref{GM}), we need to solve 
\begin{equation}
 \Box (\calT_{(\ell)}\ast G_0)(y) = 0\;.
\end{equation}
Applying the Fourier transform in $\mathcal{S}^\prime(\mathbb{R}^{4})$, we get
\begin{equation}
(\omega - |\mathbf{k}|)(\omega+|\mathbf{k}|) \tilde{G}_0(\omega,\mathbf{k}) = 0 \;.
\end{equation}
Invoking the theorem in which is shown that, if the support of a distribution $f$ is the point $\{0\}$, then it is uniquely representable in the form \cite{Vladimirov}
\begin{equation}
f(x) = \sum^{n}_{j=1} \sum^{r_j -1}_{k=0} C_{jk} \delta^{(k)}(x - x_j)
\end{equation}
where $n$ is the number of zeros, $r_j$ is the periodicity of each one and $C_{jk}$ are real constants, we obtain
\begin{equation}
\tilde{G}_0(\omega,\mathbf{k}) = A \delta(\omega -|\mathbf{k}|) + B \delta(\omega + |\mathbf{k}|) 
\end{equation}
where $A:= C_{10}$ and $B:=C_{20}$. Therefore,
\begin{equation}
G_0(t,\mathbf{x}) = \frac{1}{(2\pi)^4} \int_{\mathbb{R}^3} \D\mathbf{k} \left\{A e^{i( |\mathbf{k}| t + \mathbf{k}\cdot\mathbf{x})} + B e^{-i(|\mathbf{k}|t - \mathbf{k}\cdot\mathbf{x})} \right\} \;.
\end{equation}

\subsection{Solution of the inhomogeneous integro-differential  equation}
Let us discuss the inhomogeneous integro-differential  equation (\ref{GM}). Applying the Fourier transformation in $\mathcal{S}^\prime(\mathbb{R}^4)$,  we get
\begin{equation}
\label{eqf}
\begin{split}
\tilde{G}_{I}(\omega,\mathbf{k}) = - \frac{e^{\ell^2\omega^2}}{\omega^2-|\mathbf{k}|^2}\;. 
\end{split}
\end{equation}
Note that isolating $\tilde{G}_{I}(\omega,\mathbf{k})$ causes that the term on the right-side of equation (\ref{eqf}) does not belong to $\mathcal{S}^\prime(\mathbb{R}^4)$ since 
\begin{equation}
\begin{split}
\left(\mathcal{F}^{-1}\left[\tilde{G}_{I}(\omega,\mathbf{k})\right], \varphi\right) &= \left(\tilde{G}_{I}(\omega,\mathbf{k}),\mathcal{F}^{-1}\left[\varphi\right]\right)\\
& = \frac{1}{(2\pi)^4}\int_{\mathbb{R}^4}\D x\:\varphi(x) \int_{\mathbb{R}^4}\D\omega \D\mathbf{k}\:\tilde{G}_{I}(\omega,\mathbf{k}) e^{i(\omega t + \mathbf{k}\cdot\mathbf{x})} 
\end{split}
\end{equation}
diverges. Therefore, the inverse of Fourier transform does not exist, and consequently, we cannot find $G(x)$.  This suggests that the kernel might not be a locally integrable function of slow growth. In fact, obtaining the kernel $G_{I}(x)$ for any $\mathcal{J}_{\mu\nu}(x)$, the initial value problem would be wholly determined since the theorem assures that the solution (\ref{soldif}) is unique because $G_0(x)$ is completely determined.

\vspace{0.1cm}
Due to the complexity of this initial value problem, we will adopt the point of view raised in \cite{Llosa1994,Heredia2,Gomis2001} and we will assume that the Euler-Lagrange equations (\ref{IntEOM}) are not longer dynamic equations but constraints that limit the kinematic space $\mathcal{K}$. We will search what conditions $T_{\mu\nu}(x)$ shall meet to obtain a solution.  We will work on the static case and the homogeneous case separately.

\subsubsection{Static case}
Let us focus only on the static case $\hat\Phi_{\mu\nu}(t,\mathbf{x}) := \hat\Phi_{\mu\nu}(\mathbf{x})$. Under this assumption, equation (\ref{IntEOM}) becomes
\begin{equation}
\Delta \hat \Phi_{\mu\nu}(\mathbf{y})=  - 2\kappa \left(\calE_{(\ell)}\ast T_{\mu\nu}\right)(\mathbf{y})
\end{equation}
where we have used proposition (\ref{rel}). Using proposition (\ref{FourierConv}), the solution of this integro-differential equation is 
\begin{equation}
\label{solsp}
\hat\Phi_{\mu\nu}(\mathbf{y}) = 2 \kappa\:\mathcal{F}^{-1}\left[\frac{e^{-\ell^2|\mathbf{k}|^2}}{|\mathbf{k}|^2} \tilde T_{\mu\nu}\right](\mathbf{y}),
\end{equation}
which only exists in $\mathcal{S}^\prime(\mathbb{R}^3)$ if, and only if, the inverse of Fourier transform exists in $\mathcal{S}^\prime(\mathbb{R}^3)$. Therefore, 
not all sources $T_{\mu\nu}(\mathbf{x})$ can be used to find a solution in $\mathcal{S}^\prime(\mathbb{R}^3)$, but only those that satisfy that $\frac{e^{-\ell^2|\mathbf{k}|^2}}{|\mathbf{k}|^2} \tilde T_{\mu\nu}$ belongs to $\calS^\prime(\RR^3)$. Note that we cannot use proposition (\ref{FourierConv}) again to compute the inverse of the Fourier transform, since the factor $\frac{e^{-\ell^2|\mathbf{k}|^2}}{|\mathbf{k}|^2}\in\calS^\prime(\RR^3)$ and $\tilde T_{\mu\nu}\in\calS^\prime(\RR^3)$. Let us explore a sufficient condition so that the inverse Fourier transform exists. 

\begin{remark}
\label{Tsol_2}
A bounded function $\mathcal{H}(\mathbf{x})$ belongs to $\calS^\prime(\RR^3)$ as a regular distribution. 
\end{remark}

\begin{proof}
To prove this remark, we will rely on proposition (\ref{defint}); Showing that $\mathcal{H}(\mathbf{x})$ is a locally integrable function of slow growth, we can assure it belongs to $\calS^\prime(\RR^3)$ as a regular distribution.

\vspace{0.1cm}
\noindent
According to our assumption $\mathcal{H}(\mathbf{x})$ is bounded. Therefore, there exists a number $M > 0$, such that
\begin{equation}
\underset{\mathbf{k}\in\mathbb{R}^3}{\mathrm{sup}}\left|\mathcal{H}(\mathbf{x})\right| = M.
\end{equation}
Thus,
\begin{equation}
\begin{split}
\int_{\mathbb{R}^3} \D\mathbf{x}   \left|\mathcal{H}(\mathbf{x})\right| (1+|\mathbf{x}|^2)^{-\frac{s}{2}} & \leq C  \underset{\mathbf{x}\in\mathbb{R}^3}{\mathrm{sup}} \left|\mathcal{H}(\mathbf{x})\right| \int_{\mathbb{R}^3} \frac{\D\mathbf{x}}{(1+|\mathbf{x}|^2)^{\frac{s}{2}}} \leq C_1 \underset{\mathbf{x}\in\mathbb{R}^3}{\mathrm{sup}} \left|\mathcal{H}(\mathbf{x})\right| < \infty
\end{split}
\end{equation}
where $C_i$ are positive real constants. 
\end{proof}

\noindent
As a consequence of this remark, 
\begin{proposition}
\label{Tsol}
Let $\mathcal{H}(\mathbf{k})$ be a bounded function. It is sufficient that $\tilde T_{\mu\nu}(\mathbf{k})$ is
\begin{equation}
\label{Jsolk}
\tilde T_{\mu\nu}(\mathbf{k}) = C_{\mu\nu} \mathcal{H}(\mathbf{k}) 
\end{equation}
where $C_{\mu\nu}$ is a constant tensor so that solution (\ref{solsp}) exits.
\end{proposition}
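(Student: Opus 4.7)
The plan is to reduce the existence question for (\ref{solsp}) to a check that the function in square brackets is itself a regular tempered distribution in $\calS^\prime(\RR^3)$. Once this is established, one invokes proposition (\ref{StoS}) to conclude that $\calF^{-1}$ sends it to an element of $\calS^\prime(\RR^3)$, which is precisely the statement that $\hat\Phi_{\mu\nu}(\mathbf{y})$ exists. So the whole argument boils down to showing that, under assumption (\ref{Jsolk}),
\[
F_{\mu\nu}(\mathbf{k}) := C_{\mu\nu}\,\frac{e^{-\ell^2|\mathbf{k}|^2}}{|\mathbf{k}|^2}\,\mathcal{H}(\mathbf{k})
\]
defines a regular element of $\calS^\prime(\RR^3)$.

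Since $\mathcal{H}$ is bounded, there exists $M>0$ with $|\mathcal{H}(\mathbf{k})|\leq M$, so one has the pointwise majorization $|F_{\mu\nu}(\mathbf{k})|\leq |C_{\mu\nu}|\,M\,e^{-\ell^2|\mathbf{k}|^2}/|\mathbf{k}|^2$. The strategy mirrors remark (\ref{Tsol_2}): verify that $F_{\mu\nu}$ is locally integrable and of slow growth, and then conclude by proposition (\ref{defint}) that it is a regular tempered distribution. The slow-growth part is immediate, because the Gaussian factor decays faster than any inverse polynomial, so multiplying by the weight $(1+|\mathbf{k}|^2)^{-s/2}$ and integrating over any outer region produces a finite number.

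The only non-trivial step is local integrability near the origin. Away from $\mathbf{k}=0$ the integrand is bounded and rapidly decaying, so the issue localizes to a small ball $|\mathbf{k}|\leq\varepsilon$, where $F_{\mu\nu}$ is controlled by $|C_{\mu\nu}|\,M/|\mathbf{k}|^2$. Passing to spherical coordinates in $\RR^3$,
\[
\int_{|\mathbf{k}|\leq\varepsilon}\frac{d\mathbf{k}}{|\mathbf{k}|^2}=4\pi\int_0^\varepsilon\frac{r^2\,dr}{r^2}=4\pi\varepsilon<\infty,
\]
so the singularity is harmless thanks to the three-dimensional volume element $r^2$ cancelling the pole. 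Combining this with the slow-growth bound from the previous paragraph establishes $F_{\mu\nu}\in\calS^\prime(\RR^3)$, and applying $\calF^{-1}$ yields $\hat\Phi_{\mu\nu}(\mathbf{y})\in\calS^\prime(\RR^3)$ as required.

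The main---and really only---obstacle is this local integrability near $\mathbf{k}=0$, and it works precisely because the ambient spatial dimension is three. In one or two spatial dimensions the same calculation would fail and one would need to impose an extra vanishing condition on $\mathcal{H}$ near the origin, or to interpret $1/|\mathbf{k}|^2$ as a principal-value type distribution. Everything else---the exponential's super-polynomial decay, the continuity of $\calF^{-1}$ on $\calS^\prime$, and the trivial role of the constant tensor $C_{\mu\nu}$---is automatic.
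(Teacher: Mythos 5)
Your proposal is correct and follows essentially the same route as the paper: both verify the slow-growth integrability criterion of proposition (\ref{defint}) for $e^{-\ell^2|\mathbf{k}|^2}|\mathbf{k}|^{-2}\tilde T_{\mu\nu}$ using the boundedness of $\mathcal{H}$, with the $r^2$ Jacobian in three-dimensional spherical coordinates absorbing the $|\mathbf{k}|^{-2}$ pole and the Gaussian controlling the behaviour at infinity. The only cosmetic difference is that you split the integral into a near-origin ball and an outer region, whereas the paper runs a single global estimate ending in $\int_0^\infty (1+r^2)^{-s}\,\D r$; your closing observation about the role of the spatial dimension is a nice remark but not a new ingredient.
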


\begin{proof}
As before,  we will rely on proposition (\ref{defint}); By showing that $\frac{e^{-\ell^2|\mathbf{k}|^2}}{|\mathbf{k}|^2} \tilde T_{\mu\nu}$ is a locally integrable function of slow growth, we can confirm that it belongs to $\calS^\prime(\RR^3)$, and consequently, the inverse of the Fourier transform will exist.  Therefore, according to our assumption $\mathcal{H}(\mathbf{k})$ is bounded; there exists a number $M > 0$, such that
\begin{equation}
\underset{\mathbf{k}\in\mathbb{R}^3}{\mathrm{sup}}\left|\mathcal{H}(\mathbf{k})\right| = M.
\end{equation}
Thus,
\begin{equation}
\begin{split}
\int_{\mathbb{R}^3} \D\mathbf{k}  \frac{e^{-\ell^2|\mathbf{k}|^2}}{|\mathbf{k}|^2} \left|\tilde T_{\mu\nu}(\mathbf{k})\right| (1+|\mathbf{k}|^2)^{-\frac{s}{2}} &=C_{\mu\nu} \int_{\mathbb{R}^3} \D\mathbf{k} \frac{\left|\mathcal{H}(\mathbf{k})\right|}{|\mathbf{k}|^2(1+|\mathbf{k}|^2)^{\frac{s}{2}}} e^{-\ell^2|\mathbf{k}|^2}\\
& \leq C_{\mu\nu}  \underset{\mathbf{k}\in\mathbb{R}^3}{\mathrm{sup}} \left|\mathcal{H}(\mathbf{k})\right| \int_{\mathbb{R}^3} \D\mathbf{k}\frac{ e^{-\ell^2|\mathbf{k}|^2}}{|\mathbf{k}|^2(1+|\mathbf{k}|^2)^{\frac{s}{2}}} \frac{(1+|\mathbf{k}|^2)^{\frac{s}{2}}}{(1+|\mathbf{k}|^2)^{\frac{s}{2}}}\\
& \leq C_{(1)\mu\nu} \underset{\mathbf{k}\in\mathbb{R}^3}{\mathrm{sup}} e^{-\ell^2|\mathbf{k}|^2}(1+|\mathbf{k}|^2)^{\frac{s}{2}}  \int_{\mathbb{R}^3} \frac{\D\mathbf{k}}{|\mathbf{k}|^2(1+|\mathbf{k}|^2)^s}\\
&  \leq C_{(2)\mu\nu}\int^\infty_0 \frac{\D r}{(1+r^2)^s} <\infty
\end{split}
\end{equation}
where $C_{(i)\mu\nu}$ are constant tensors and the spherical coordinates have been used.
\end{proof}

\noindent
Let us illustrate this proposition with three examples.
 
\paragraph*{First example}
The simplest example of a bounded Hilbert energy-momentum tensor in Fourier space is 
\begin{equation}
\label{TE1}
\tilde T_{\mu\nu}(\mathbf{k}) = m \delta^0_\mu \delta^0_\nu\:\mathcal{H}(\mathbf{k}) \qquad \mathrm{with} \qquad \mathcal{H}(\mathbf{k}) = 1 \:.
\end{equation}
Applying the inverse of the Fourier transform in $\calS^\prime(\RR^3)$, we obtain that the Hilbert energy-momentum tensor for matter $T_{\mu\nu}(\mathbf{x})$ is
\begin{equation}
T_{\mu\nu} (\mathbf{x}) = m  \delta^0_\mu \delta^0_\nu\:\delta (\mathbf{x}),
\end{equation}
which is the habitual Delta Dirac source. Using (\ref{solsp}) with (\ref{TE1}), we get 
 \begin{equation}
\label{h3}
\hat \Phi_{00}(\mathbf{x}) = \frac{\kappa m}{2\pi}\frac{1}{ |\mathbf{x}|}\mathrm{Erf}\left[\frac{|\mathbf{x}|}{2\ell}\right]\:.
\end{equation}
This result was found in \cite{Biswas2012}.

\paragraph*{Second example}
We take as another possibility
\begin{equation}
\label{TE3}
\tilde T_{\mu\nu}(\mathbf{k}) = m\:\delta^0_\mu \delta^0_\nu \:\mathcal{H}(\mathbf{k}) \qquad \mathrm{with} \qquad \mathcal{H}(\mathbf{k}) = J_0\left(\sqrt{k^2_x + k^2_y}\:a\right)
\end{equation}
where $J_0(x)$ is the Bessel function --- bounded function --- , $a$ is a real positive constant, and we assume that $\mathbf{k}^2 = k_x^2+k_y^2+k_z^2$. Applying the inverse of Fourier transform using the cylindrical coordinate\footnote{The cylindrical coordinates are: $x = r \cos(\theta)$, $y = r \sin(\theta)$ and $z =z$, and the volume element is $\D\mathbf{x} = r \D r \D\theta \D z$. } in $\mathcal{S}^\prime(\mathbb{R}^3)$, we find that 
\begin{equation}
\label{TE2}
T_{\mu\nu}(\mathbf{x}) = \frac{m}{\pi} \delta^0_\mu \delta^0_\nu\:\delta(z) \delta(x^2 +y^2 -a^2) 
\end{equation} 
is the Hilbert energy-momentum tensor for a distribution of matter on a ring of radius $a$. Notice that we have used the following property of Bessel functions
\begin{equation}
\int^\infty_{0} \D r\:r\:J_{\alpha}\left(\beta r\right) J_\alpha\left(\xi r\right) =  \frac{1}{\xi} \delta(\xi-\beta)\:.
\end{equation}
Plugging equation (\ref{TE3}) into (\ref{solsp}) and setting $z=0$,  the solution becomes
\begin{equation}
\hat{\Phi}_{00}(\mathbf{x}) = \frac{\kappa m}{2\pi} \int^\infty_{0} \D r J_0\left(a r\right) J_0\left(\sqrt{x^2 + y^2} \:r\right) \mathrm{Erfc}(\ell r)
\end{equation}
where $\mathrm{Erfc}(x)$ is the complementary error function. This result was found in \cite{Buoninfante2018}. 

\paragraph*{Third example}
Note that proposition (\ref{Tsol}) indicates a sufficiency condition. It may therefore be the case that the Hilbert energy-momentum tensor is described in another way, for instance, by the Dirac delta or the Principal Value $\mathcal{P}$. A very illustrative example is the following: 
\begin{equation}
\label{Jcosmic}
\tilde T_{\mu\nu}(\mathbf{k}) = \mathcal{Y}_{\mu\nu}(k_s) \left( i\:\mathcal{P}\left[\frac{1}{k_z}\right] +\pi \delta(k_z)\right)
\end{equation}
where $\mathcal{Y}_{\mu\nu}$ is a tensor whose components are bounded functions of the s-component of the vector $k_s = (k_x,k_y)$. We choose for convenience that $\mathcal{Y}_{\mu\nu}(k_s)$ is 
\begin{equation}
\label{Pmn}
\mathcal{Y}_{\mu\nu}(k_s) := - i D^n_{\mu\nu} j^{\:s}_{n} k_s
\end{equation}
where $D^n_{\mu\nu}$ is a constant tensor, $j_{sn}$ is an antisymmetric angular tensor, and the Latin indices $s,n$ are two-dimensional indices of the $xy$-plane. Applying the inverse Fourier transform in $\mathcal{S}^\prime(\mathbb{R}^3)$ to (\ref{Jcosmic}) with (\ref{Pmn}) and
\begin{equation}
\label{CD}
D^n_{\mu\nu} = \delta^0_{(\mu}\delta^n_{\nu)}\:,
\end{equation}
we obtain that 
\begin{equation}
\label{Tcosmic}
T_{\mu\nu}(\mathbf{x}) =  -\delta^0_{(\mu}\delta^n_{\nu)} j^{\:s}_n \partial_s \delta(x) \delta(y) \theta(z) 
\end{equation}
is the Hilbert energy-momentum tensor for a spinning semi-infinite string. One might be tempted to plug equation (\ref{Jcosmic}) into (\ref{solsp}) and calculate the inverse of Fourier transform. However, this is not the most optimal way to do it. To compute $\hat \Phi_{\mu\nu}(\mathbf{x})$, we will proceed in the same way it is stated in the appendix of \cite{Boos2021}. This method consists of deriving with respect to $z$ the Hilbert energy-momentum tensor $T_{\mu\nu}(\mathbf{x})$ in order to transform the Heaviside theta into a Dirac delta and thus get the solution from the standard point source. Due to its simplicity, we will adapt this idea and apply it within this approach. Therefore, we derive with respect to $z$ equation (\ref{Tcosmic}),
\begin{equation}
\label{php}
T^\prime_{\mu\nu}(\mathbf{x}) =  -\delta^0_{(\mu}\delta^n_{\nu)} j^{\:s}_n \partial_s \delta(\mathbf{x})
\end{equation}
where $'$ denotes the derivative with respect to $z$. Consequently, solution (\ref{solsp}) becomes 
\begin{equation}
\label{hfd}
\begin{split}
\hat \Phi^\prime_{\mu\nu}(\mathbf{y}) &= - 2 i \kappa\:\delta^0_{(\mu}\delta^n_{\nu)} j^{\:s}_n\:\calF^{-1}\left[k_s \frac{e^{-\ell^2 |\mathbf{k}|^2}}{|\mathbf{k}|^2}\right] \\
&= - \frac{\kappa}{2\pi} \delta^0_{(\mu}\delta^n_{\nu)} j^{\:s}_n\:\partial_s\left(\frac{1}{|\mathbf{x}|} \Erf\left[\frac{|\mathbf{x}|}{2\ell}\right]\right)\:.
\end{split}
\end{equation}
Finally, by integrating over $z$, we get
\begin{equation}
\begin{split}
\hat \Phi_{\mu\nu}(\mathbf{x}) &= \int^z_{-\infty} \D \tilde z\:\hat\Phi^\prime_{\mu\nu}(\mathbf{x}) \\
& = \frac{\kappa}{2\pi} \delta^0_{(\mu}\delta^n_{\nu)} j^{\:s}_n\left\{1- \left(1+ \Erf\left[\frac{z}{2\ell}\right]\right)e^{-\frac{|\mathbf{x}|^2_{_{(2)}}}{4\ell^2}}+ \frac{z}{|\mathbf{x}|}\Erf\left[\frac{|\mathbf{x}|}{2\ell}\right]\right\}
\end{split}
\end{equation}
where $|\mathbf{x}|_{(2)} := \sqrt{x^2+y^2}$ and we have first derived and then integrated.  This result was found in \cite{Kol2020}.

\vspace{0.3cm}
\noindent
With this example, it is clear that there may be cases where the Hilbert energy-momentum tensor can be described differently, and still, the solution exists. However, note that $\frac{e^{-\ell^2|\mathbf{k}|^2}}{|\mathbf{k}|^2} \tilde T_{\mu\nu}\in\calS^\prime(\RR^3)$ is still fulfilled for this case.

\vspace{0.3cm}
To finish these three examples, one would have to obtain $h_{\mu\nu}(\mathbf{x})$ from $\hat\Phi_{\mu\nu}(\mathbf{x})$. However, thanks to definition (\ref{Ptrans}) and condition (\ref{rel:hp}), for this particular case it is straightforward, since we can directly relate $\hat\Phi_{\mu\nu}(\mathbf{x})$ with $h_{\mu\nu}(\mathbf{x})$, i.e, 
\begin{equation}
h_{\mu\nu}(\mathbf{x}) = \hat\Phi_{\mu\nu}(\mathbf{x}) - \frac{1}{2} \eta_{\mu\nu}\hat\Phi(\mathbf{x})\:.
\end{equation}

\subsubsection{Homogeneous time-dependent case}

Let us mention a few words about the homogeneous time-dependent case $\hat\Phi_{\mu\nu}(t,\mathbf{x}) := \hat\Phi_{\mu\nu}(t)$. For this particular case, proposition (\ref{rel}) do not affect and equation (\ref{IntEOM}) then simplifies as follows
\begin{equation}
\partial^2_t(\calT_{(\ell)}\ast \hat\Phi_{\mu\nu}) (t) = 2\kappa \:T_{\mu\nu}(t)
\end{equation}
where the Hilbert energy-momentum tensor for matter $T_{\mu\nu}(t)$ is supported on $\mathbb{R}^+$. The solution of this integral equation in $\mathcal{S}^\prime(\mathbb{R}^+)$ is
\begin{equation}
\hat \Phi_{\mu\nu}(t) = - 2 \kappa\:\mathcal{F}^{-1}\left[\frac{e^{\ell^2\omega^2}}{\omega^2}\tilde T_{\mu\nu}\right](t)
\end{equation}
which only exists if, and only if, the inverse of Fourier transform exists in $\mathcal{S}^\prime(\mathbb{R}^+)$. Therefore, we need that
\begin{equation}
\label{Homocase}
\frac{e^{\ell^2\omega^2}}{\omega^2}\tilde T_{\mu\nu}\in\calS^\prime(\RR^+)\:.
\end{equation}
As we have discussed above, not all Hilbert energy-momentum tensors for matter can be used to obtain a solution in $\mathcal{S}^\prime(\mathbb{R})$, but only those that satisfy the condition (\ref{Homocase}). However, let us note that this condition might be fulfilled as long as the Hilbert energy-momentum tensor is non-local scale-dependent 
\begin{equation}
\tilde T_{\mu\nu}(\omega) \sim e^{-a^2 \omega^2}, \qquad \mathrm{with} \qquad a\in\RR^+ \qquad \mathrm{and} \qquad a^2 > \ell^2   
\end{equation}
since, it could cancel the exponential growth. From a mathematical point of view, this source provides a correct solution; however, from a physical point of view, the dependence on the non-local scale parameter in the Hilbert energy-momentum tensor makes us discern that the solution obtained does not belong to IDG. Recall that non-locality is introduced only via form-factors (which depend on $\ell$), not modified sources. Thus, the interpretation of the sources with dependence on $\ell$ is not clear in this physical context. Therefore, following this argument (i.e., sources that may depend on the scale are not considered), the condition is not satisfied since $\frac{e^{\ell^2\omega^2}}{\omega^2}\tilde T_{\mu\nu}$ is not a tempered distribution. Consequently, the Fourier transform does not exist in $\mathcal{S}^\prime(\mathbb{R}^+)$ and therefore, the solution does not exist in $\mathcal{S}^\prime(\mathbb{R}^+)$ either .

\section{Conclusion}
\label{sectionVI}

In IDG, non-locality is introduced by infinite derivatives encapsulated in forms factors. These form factors are entire functions that do not include any additional zeros in the complex plane, and might be then expressed by formal Taylor series. Handling these infinite series requires controlling their convergence when they act on functions belonging to class $\calC^\infty(\RR^4)$.  Failure to do so could lead to the mistake of working with series that could be infinite. 

\vspace{0.1cm}
Throughout this article, we focused only on the ghost-free non-local operator and a very particular theory of linearized gravity in which the Lagrangian density is entirely analytical. In order to avoid working with infinite series, we transformed the infinite-order Lagrangian into a non-local one. We achieved this transformation due to the definition of the non-local operator acting on functions $\mathcal{C}^\infty(\mathbb{R}^4)$ via the inverse Fourier transform in the space of tempered distributions $\mathcal{S}^\prime(\RR^4)$. 

\vspace{0.1cm}
This way of defining how the non-local operator acts on these functions has two significant consequences. The first consequence is the indentification of the kinematic space $\calK$. As we showed, condition (\ref{rel:hp}) must be satisfied, so that proposition (\ref{def:noperator}) holds. Consequently, the non-local operator is only well-defined on a subset of it. The second consequence (derived from the first one) is the structure of these functions. As we proved --- propositions (\ref{bound}) and (\ref{bound2}) ---,  the tensor field $h_{\mu\nu}(x)$ and its derivatives for which the non-local operator is well-defined are bounded by a polynomial in the static case.  Therefore, they are regular in all $\RR^3$. This fact is of great importance since it indicates that the Riemann tensor --- where the second derivatives of $h_{\mu\nu}$ are involved --- will be finite everywhere and, consequently, all the scalar curvature invariants will be finite as well. The latter is an indicator that, under these conditions, there is no problem of spacetime singularities. 

\vspace{0.1cm}
Transforming the infinite-order Lagrangian into a non-local one causes that the new variables $\Phi_{\mu \nu}(x)$ are no longer the gravitational fields $h_{\mu\nu}(x)$, and their sources are not the Hilbert energy-momentum tensor for matter $T_{\mu\nu}(x)$ but a scaled tensor $\mathcal{J}_{\mu\nu}(x)$. Consequently, we could find solutions that do not belong to IDG. For this reason, we always need to keep in mind the condition (\ref{rel:hp}) to validate whether the solution belongs to IDG or not. Another consequence of this method is that the equations of motion to be solved are inhomogeneous linear integro-differential equations. According to \cite{Vladimirov}, the general solution of this problem might be then expressed by employing the convolution between the source and the fundamental solution of the operator of the inhomogeneous linear integro-differential equation. However, we observed that the fundamental solution of this integro-differential operator does not exist in $\mathcal{S}^\prime(\mathbb{R}^4)$, implying that it might not be a locally integrable function of slow growth.  

\vspace{0.1cm}
Due to the complexity of the initial value problem, we followed the idea of \cite{Llosa1994,Heredia2,Gomis2001} where Euler-Lagrange's equations are taken as constraints that restrict the kinematic space $\mathcal{K}$ rather than dynamic equations. For simplicity, we worked on the static case and the homogeneous time-dependent case separately. Under this approach, we observed that, for the homogeneous time-dependent case, if the Hilbert energy-momentum tensor were non-local scale-dependent, the solution would exist, since condition (\ref{Homocase}) could be fulfilled. However, as the non-locality in IDG is introduced by form factors and not by modified sources, we discern that these solutions might belong to it. Therefore, considering the case where the Hilbert energy-momentum tensor does not depend on the non-local scale, the solution may not exist in $\mathcal{S}^\prime$. On the other hand, we showed that, for the static case, we could build solutions in $\mathcal{S}^\prime$ if the Hilbert energy-momentum tensor in the Fourier space is a bounded function. To exemplify it, we discussed three examples.

\vspace{0.1cm}
Finally, let us highlight that this mathematical treatment of non-local operators applies to linearized IDG and full IDG, if one focuses on geometries for which the entire field equations reduce to linear equations. This fact happens, for instance, for the class of almost universal spacetimes \cite{Kolar2021_4,Kolar2021_3}. On the other hand, note that not having been able to find a general solution resides in how we defined the non-local operator. This definition limits in excess the type of solution we might have. For this reason, one should study further (for instance, using other definitions) how this operator might act on functions of class $\mathcal{C}^\infty(\mathbb{R}^4)$ to keep investigating new solutions for these kinds of theories (see \cite{Gruman1973,Carlsson2014} and references therein). Moreover, it would be very challenging to consider how to generalize proposition (\ref{def:noperator}) in a covariant way.  We will leave it for further investigation.

\section*{Acknowledgment}
Authors would like to thank Joaquim Gomis for valuable discussions about non-local theories. J.LL. was supported by the Spanish MINCIU and ERDF (project ref. RTI2018-098117-B-C22). I.K. and A.M. were supported by Netherlands Organization for Scientific Research (NWO) grant no. 680-91-119. F.J.M.T. acknowledges financial support from NRF grants no. 120390, reference: BSFP190416431035; no. 120396, reference: CSRP190405427545; no. 101775, reference: SFH150727131568, and the support from the Research Council of Norway.

\appendix
\section*{Appendix}
\section{Space of tempered distributions $\mathcal{S}^\prime$}
\label{A:S}
For completeness of this article, let us specify some definitions and review some concepts of tempered distributions \cite{Vladimirov_GF,Vladimirov,Hormander1990}.

\begin{definition}
\label{def:S}
$\mathcal{S}$ is the class of all functions $f\in\mathcal{C}^\infty(\mathbb{R}^n)$ such that they decay faster than any power of $|x|^{-1}$. The norm defined in $\mathcal{S}$ is
\begin{equation}
||f||_s = \underset{|\alpha| \leq s, x\in\mathbb{R}^n}{\mathrm{sup}} (1+ |x|^2)^{\frac{s}{2}} \left|D^\alpha f(x)\right| 
\end{equation}
with $s\in\mathbb{N}$ and 
\begin{equation}
\label{Deriv}
D^\alpha f(x)= \frac{\partial^{|\alpha|} f(x)}{\partial x^{\alpha_1}_1 ... \partial x^{\alpha_n}_n}
\end{equation}
where $\alpha$ denotes the multi-index notation. This space is known as \textit{the space of rapidly diminishing functions} or \textit{Schwartz space}\cite{Schwartz1966}.
\end{definition}

\begin{proposition}
\label{SsL}
Functions belonging to $\mathcal{S}$ are always $L^p$-integrable functions, i.e. $\mathcal{S}(\mathbb{R}^n) \subset L^p(\mathbb{R}^n)$.
\end{proposition}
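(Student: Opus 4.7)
The plan is to derive a pointwise decay bound for any Schwartz function from the seminorms $\|\cdot\|_s$ of Definition \ref{def:S}, and then feed this bound into the $L^p$-norm integral, choosing the seminorm index $s$ large enough to make the resulting integral converge.

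First, I would take $f\in\mathcal{S}(\mathbb{R}^n)$ and note that, by taking $\alpha=0$ in the supremum defining $\|f\|_s$, one immediately gets the pointwise estimate
\begin{equation}
|f(x)|\leq \frac{\|f\|_s}{(1+|x|^2)^{s/2}}\qquad \text{for every } x\in\mathbb{R}^n \text{ and every } s\in\mathbb{N}.
\end{equation}
Since $\|f\|_s<\infty$ for all $s$ by definition, this rapid-decay bound is available with $s$ as large as we please.

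Next, for $1\leq p<\infty$, I would raise this inequality to the $p$-th power and integrate:
\begin{equation}
\int_{\mathbb{R}^n}|f(x)|^p\,\D x\;\leq\;\|f\|_s^{\,p}\int_{\mathbb{R}^n}\frac{\D x}{(1+|x|^2)^{ps/2}}.
\end{equation}
Passing to spherical coordinates, the remaining integral is a constant multiple of $\int_0^\infty r^{n-1}(1+r^2)^{-ps/2}\,\D r$, which converges as soon as $ps-n>1$, i.e.\ as soon as $s>(n+1)/p$. Choosing any such integer $s$ yields $\|f\|_{L^p}<\infty$. For the endpoint $p=\infty$, the bound is immediate: $\|f\|_{L^\infty}\leq \|f\|_0<\infty$ directly from the $s=0$ seminorm.

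There is no real obstacle here; the only mildly delicate point is ensuring the chosen $s$ is large enough in terms of both $n$ and $p$, and remembering that the inclusion is meant in the almost-everywhere sense (Schwartz functions are continuous, so their equivalence class in $L^p$ is unambiguous). The proof therefore reduces to the two observations above, and can be closed in a few lines.
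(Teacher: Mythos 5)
Your proof is correct and follows essentially the same route as the paper's: extract the pointwise decay $|f(x)|\leq \|f\|_s(1+|x|^2)^{-s/2}$ from the Schwartz seminorms, raise to the $p$-th power, and choose $s$ large enough that the resulting integral of $(1+|x|^2)^{-ps/2}$ converges (the paper merely rearranges the weight slightly differently before pulling out the supremum, and requires $s>n$ rather than optimizing in $p$). Your added remarks on the $p=\infty$ endpoint and the almost-everywhere sense of the inclusion are fine but not needed for the claim as stated.
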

\begin{proof}
Let $f\in\mathcal{S}(\mathbb{R}^n)$. For a $s\geq 0$, $|f(x)| (1+|x|^2)^{\frac{s}{2}}$ is bounded, namely
\begin{equation}
(1+|x|^2)^{\frac{s}{2}} |f(x)|  \leq \underset{s\geq0, x\in\mathbb{R}^n}{\mathrm{sup}}(1+|x|^2)^{\frac{s}{2}}|f(x)| \;.
\end{equation}
Therefore,
\begin{align}
\int_{\mathbb{R}^n}\D x\:|f(x)|^p &= \int_{\mathbb{R}^n}\D x\:\left[(1+|x|^2)^{\frac{s}{2}}|f(x)|\right]^p (1+|x|^2)^{-\frac{sp}{2}}\nonumber\\
&\leq \int_{\mathbb{R}^d}\D x\:\left[(1+|x|^2)^{\frac{s}{2}}|f(x)|\right]^p (1+|x|^2)^{-\frac{s}{2}}\nonumber\\
&\leq \left[\underset{s\geq0, x\in\mathbb{R}^d}{\mathrm{sup}}(1+|x|^2)^{\frac{s}{2}}|f(x)|\right]^p \int_{\mathbb{R}^n} \D x \frac{1}{(1+|x|^2)^{\frac{s}{2}}}\nonumber\\
& = C ||f||_s^p < \infty
\end{align}
as long as $s > n$, and $C$ is a real positive constant.
\end{proof}

\begin{remark}
\label{A2}
The gaussian function $f(x) =e^{-a |x|^2}$ belongs to $\mathcal{S}(\mathbb{R}^n)$.
\end{remark}

\begin{proof}
We know that 
\begin{equation}
D^\alpha_x e^{-a |x|^2} = P_{(\alpha)}(x)e^{-a |x|^2} 
\end{equation}
where $P_{(\alpha)}(x)$ is a polynomial of degree $\alpha$. Moreover, we know that a polynomial can always be bounded by another polynomial of a higher degree, namely
\begin{equation} 
\left|P_{(\alpha)}(x)\right| \leq C_1 (1+ |x|^2)^{\frac{s}{2}} \quad \mathrm{with} \quad C_1\in\mathbb{R}^{+} 
\end{equation}
where $\frac{s}{2} \geq \alpha$. On the other hand, we observe that
\begin{equation}
e^{|x|^2} = \sum^\infty_{j=0} \frac{(|x|^{2})^j}{j!}\geq C_2 (1 + |x|^2)^s  \quad \mathrm{with} \quad C_2\in\mathbb{R}^{+}
\end{equation}
since all the terms are positive. Consequently, 
\begin{equation}
\label{expcond}
\left|e^{-|x|^2}\right| \leq C_3 \left| \frac{1}{(1+|x|^2)^s}\right| =  C_3 \frac{1}{(1+|x|^2)^s} \quad \mathrm{with} \quad C_3\in\mathbb{R}^{+}.
\end{equation}
Applying definition (\ref{def:S}), we get
\begin{equation}
\begin{split}
(1+|x|^2)^{\frac{s}{2}}|f(x)| &\leq \underset{x\in\mathbb{R}^n; \alpha\leq s}{\mathrm{sup}} (1+ |x|^2)^{\frac{s}{2}} \left| D^\alpha e^{-a|x|^2} \right|\\
&\leq \underset{x\in\mathbb{R}^n; \alpha\leq s}{\mathrm{sup}} (1+ |x|^2)^{\frac{s}{2}} \left|P_{(\alpha)}(x)\right| \left|e^{-a|x|^2}\right|\\
& \leq C_4 \underset{x\in\mathbb{R}^n; \alpha\leq s}{\mathrm{sup}} (1+ |x|^2)^{s} \left|e^{-a|x|^2}\right|\\
& \leq C_5 \underset{x\in\mathbb{R}^n; \alpha\leq s}{\mathrm{sup}} \frac{(1+ |x|^2)^s}{(1+ |x|^2)^s} = C_5 <\infty
\end{split}
\end{equation}
where $C_4$ and $C_5$ are real positive constants.
\end{proof}

\begin{definition}
$\mathcal{S}^\prime$ is the class of all distributions acting on $\mathcal{S}$. This space is known as \textit{the space of tempered distributions}.
\end{definition}

\begin{proposition}
\label{defint}
Every locally integrable function $f(x)$ of slow growth at infinity i.e.
\begin{equation}
\int_{\mathbb{R}^n} \left|f(x)|\right (1+|x|^2)^{-\frac{s}{2}}\D x <\infty
\end{equation}
for a certain $s\geq 0$ defines a regular distribution\footnote{A regular distribution is a distribution that is generated by a function $f(x)$ locally integrable in $\mathbb{R}^n$. On the other hand, a singular distribution cannot be identified with any locally integrable function. The most common examples of singular distribution are the Dirac Delta $\delta(x)$ and the principal value of $1/x$.} $f$ belonging to $\mathcal{S}^\prime$ according to
\begin{equation}
\label{S'}
(f,\varphi) := \int_{\mathbb{R}^n} f(x) \varphi(x) \D x, \qquad \varphi\in \mathcal{S}(\mathbb{R}^n) \;.
\end{equation}
\end{proposition}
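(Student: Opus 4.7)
The plan is to verify the three properties that characterize a tempered distribution acting on $\mathcal{S}(\mathbb{R}^n)$: that the pairing $(f,\varphi)$ in equation (\ref{S'}) is well-defined (the integral converges absolutely), that it is linear in $\varphi$, and that it is continuous with respect to the Schwartz-space topology generated by the seminorms $\|\cdot\|_s$ of Definition \ref{def:S}. Linearity is immediate from the linearity of the Lebesgue integral, so the entire content of the statement reduces to absolute convergence and continuity, and both will fall out of a single H\"older-type splitting.

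The key estimate is as follows. Fix an $s \geq 0$ for which the slow-growth hypothesis $\int_{\mathbb{R}^n} |f(x)|(1+|x|^2)^{-s/2}\,\D x < \infty$ holds (such an $s$ exists by assumption). For any $\varphi \in \mathcal{S}(\mathbb{R}^n)$ I split the integrand into the product of a factor controlled by $f$ and a factor controlled by $\varphi$,
\begin{equation*}
|f(x)\varphi(x)| = \bigl[|f(x)|(1+|x|^2)^{-s/2}\bigr]\cdot\bigl[(1+|x|^2)^{s/2}|\varphi(x)|\bigr].
\end{equation*}
The second bracket is bounded uniformly in $x$ by $\|\varphi\|_s$, since the multi-index $\alpha = 0$ is among those considered in the supremum defining the seminorm in Definition \ref{def:S}. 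Integrating over $\mathbb{R}^n$ and pulling this supremum out yields
\begin{equation*}
\left|(f,\varphi)\right| \leq \|\varphi\|_s \int_{\mathbb{R}^n} |f(x)|(1+|x|^2)^{-s/2}\,\D x = C_f\,\|\varphi\|_s,
\end{equation*}
with $C_f$ finite by hypothesis and depending only on $f$.

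This one inequality does double duty: the pointwise bound together with the finiteness of $C_f$ lets Tonelli/Fubini guarantee absolute convergence of $\int f\varphi\,\D x$, so $(f,\varphi)$ is well-defined; at the same time, the estimate $|(f,\varphi)| \leq C_f\,\|\varphi\|_s$ is exactly the seminorm continuity bound required for $f$ to lie in $\mathcal{S}^\prime(\mathbb{R}^n)$, since $\varphi_j \to \varphi$ in $\mathcal{S}$ implies in particular $\|\varphi_j - \varphi\|_s \to 0$. There is no genuinely hard step; the only subtlety worth flagging is that the index $s$ used in the continuity bound is the same one appearing in the slow-growth hypothesis on $f$, so different regular distributions of this type may require different seminorms, but each individual $f$ is controlled by one fixed $\|\cdot\|_s$, which is all that $\mathcal{S}^\prime$-membership demands.
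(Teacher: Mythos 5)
Your proposal is correct and follows essentially the same route as the paper's own proof: the identical splitting of $|f(x)\varphi(x)|$ into $\bigl[|f(x)|(1+|x|^2)^{-s/2}\bigr]\cdot\bigl[(1+|x|^2)^{s/2}|\varphi(x)|\bigr]$, bounding the second factor by the seminorm $\|\varphi\|_s$, and deducing both absolute convergence and continuity from the single estimate $|(f,\varphi)|\leq C_f\,\|\varphi\|_s$. Your remark that the relevant index $s$ may vary with $f$ but is fixed for each individual $f$ is a correct and slightly more explicit statement of what the paper leaves implicit.
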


\begin{proof}
Let $\varphi\in\mathcal{S}(\mathbb{R}^n)$. Therefore,
\begin{equation}
\label{Sproof}
\begin{split}
\left|\int_{\mathbb{R}^n} f(x) \varphi(x) \D x \right| &\leq \int_{\mathbb{R}^n} \left|f(x)\right|\left|\varphi(x)\right| dx = \int_{\mathbb{R}^n}\left|f(x)|\right (1+|x|^2)^{-\frac{s}{2}} \left|\varphi(x)\right| (1+|x|^2)^{\frac{s}{2}} \D x\\
& \leq \underset{x\in\mathbb{R}^n; s \geq 0}{\mathrm{sup}} (1+ |x|^2)^{\frac{s}{2}} \left|\varphi(x)\right|  \int_{\mathbb{R}^n}\left|f(x)\right| (1+|x|^2)^{-\frac{s}{2}} \D x < \infty \:.
\end{split}
\end{equation}
Because the integral is a linear operator, it implies that (\ref{S'}) is also linear. Let us prove now continuity. Let $\varphi_k$ be a sequence belonging to $\mathcal{S}$ that converges to $0$ as $k\rightarrow\infty$; therefore, having in mind equation (\ref{Sproof}), $(f,\varphi_k) \rightarrow 0, \:\:\forall f$.
\end{proof}

\begin{proposition}
\label{SsSp}
The class of functions belonging to $\mathcal{S}(\mathbb{R}^n)$ can always be considered as tempered distributions, i.e. $\mathcal{S}(\mathbb{R}^n)\subset \mathcal{S}^\prime(\mathbb{R}^n)$.
\end{proposition}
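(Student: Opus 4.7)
The plan is to reduce the statement to Proposition \ref{defint}: if I can show that every Schwartz function is a locally integrable function of slow growth, then that proposition directly produces the associated regular tempered distribution and the inclusion follows. So the goal becomes verifying the two hypotheses of Proposition \ref{defint} for an arbitrary $f\in\mathcal{S}(\mathbb{R}^n)$.

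First I would invoke Proposition \ref{SsL}, which already tells us that $\mathcal{S}(\mathbb{R}^n)\subset L^p(\mathbb{R}^n)$ for every $p\geq 1$. In particular, for $p=1$ this gives $f\in L^1(\mathbb{R}^n)$, so $f$ is certainly locally integrable. For the slow-growth integrability condition
\begin{equation}
\int_{\mathbb{R}^n}|f(x)|(1+|x|^2)^{-s/2}\,\D x<\infty
\end{equation}
required by Proposition \ref{defint}, the cleanest route is to exploit the defining property of $\mathcal{S}$: the norm $\|f\|_s$ is finite for every $s\in\mathbb{N}$. In particular, taking the $\alpha=0$ term in the definition of the norm shows $(1+|x|^2)^{s/2}|f(x)|\leq \|f\|_s$, so $|f(x)|\leq \|f\|_s(1+|x|^2)^{-s/2}$. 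Choosing $s$ large enough (e.g.\ $s>n$), we obtain
\begin{equation}
\int_{\mathbb{R}^n}|f(x)|(1+|x|^2)^{-s/2}\,\D x\leq \|f\|_s\int_{\mathbb{R}^n}(1+|x|^2)^{-s}\,\D x<\infty,
\end{equation}
so the slow-growth hypothesis is satisfied (in fact $f$ decays faster than any polynomial, so it is ``super slow-growing'').

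Having checked both hypotheses, Proposition \ref{defint} assigns to $f$ the regular tempered distribution
\begin{equation}
(f,\varphi):=\int_{\mathbb{R}^n}f(x)\varphi(x)\,\D x,\qquad \varphi\in\mathcal{S}(\mathbb{R}^n),
\end{equation}
which belongs to $\mathcal{S}^\prime(\mathbb{R}^n)$. Since this assignment $f\mapsto (f,\cdot)$ is injective (two continuous functions giving the same distribution must coincide almost everywhere, and Schwartz functions are continuous, so they coincide pointwise), we can legitimately regard $\mathcal{S}(\mathbb{R}^n)$ as a subset of $\mathcal{S}^\prime(\mathbb{R}^n)$, proving the claim. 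No step here is a real obstacle; the only mildly subtle point is to flag the implicit use of injectivity of the canonical embedding, so that the word ``subset'' is meaningful rather than only the weaker statement that every Schwartz function \emph{induces} a tempered distribution.
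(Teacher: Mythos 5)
Your proof is correct, but it takes a genuinely different route from the paper's. The paper first establishes the more general inclusion $L^p(\mathbb{R}^n)\subset\mathcal{S}^\prime(\mathbb{R}^n)$ by estimating $|(f,\varphi)|$ with H\"older's inequality applied to $f\in L^p$ and $(1+|x|^2)^{-s/2}\in L^q$, and then invokes Proposition \ref{SsL} ($\mathcal{S}\subset L^p$) to conclude; it re-verifies linearity and continuity of the functional by hand. You instead reduce everything to Proposition \ref{defint} by checking that a Schwartz function is a locally integrable function of slow growth, which you get essentially for free from the $\alpha=0$ term of the Schwartz seminorm, $|f(x)|\leq\|f\|_s(1+|x|^2)^{-s/2}$ (indeed, since $f\in L^1$ by Proposition \ref{SsL}, one could even take $s=0$ in the slow-growth condition). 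Your route is shorter and avoids redoing the linearity/continuity check, since Proposition \ref{defint} already supplies it; the paper's route buys the independently useful intermediate fact $L^p\subset\mathcal{S}^\prime$. Your closing remark about the injectivity of the map $f\mapsto(f,\cdot)$ is a genuine refinement: neither the paper's proof of Proposition \ref{defint} nor that of Proposition \ref{SsSp} addresses it, and it is what justifies reading the conclusion as a literal set inclusion rather than merely the statement that every Schwartz function induces a tempered distribution.
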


\begin{proof}
To show this proposition, let us prove first of all that $L^p(\mathbb{R}^n) \subset \mathcal{S}^\prime(\mathbb{R}^n)$. Let $f(x)\in L^p(\mathbb{R}^n)$ and $\varphi(x)\in\mathcal{S}(\mathbb{R}^n)$. Therefore,
\begin{equation}
\begin{split}
\left|(f,\varphi)\right| &\leq \int_{\mathbb{R}^n}\D x \left|\varphi(x)\right| (1+ |x|^2)^{\frac{s}{2}} (1+ |x|^2)^{-\frac{s}{2}}\left| f(x)\right|\\
&\leq  \underset{x\in\mathbb{R}^n; s \geq 0}{\mathrm{sup}} (1+ |x|^2)^{\frac{s}{2}} \left|\varphi(x)\right|  \int_{\mathbb{R}^n}\D x \frac{\left| f(x)\right|}{(1+ |x|^2)^{\frac{s}{2}}}\:.
\end{split}
\end{equation}
As we know that $f(x)\in L^p(\mathbb{R}^n)$ and $1/(1+|x|^2)^{\frac{s}{2}}\in L^p(\mathbb{R}^n)$, we can use Hölder's inequality to get
\begin{equation}
\label{LsSp}
\left|(f,\varphi)\right| \leq \underset{x\in\mathbb{R}^n; s \geq 0}{\mathrm{sup}} (1+ |x|^2)^{\frac{s}{2}}  \left|\varphi(x)\right|   \left(\int_{\mathbb{R}^n} \D x |f(x)|^p \right)^\frac{1}{p} \left(\int_{\mathbb{R}^n} \D x \left|\frac{1}{(1+|x|^2)^{\frac{s}{2}}}\right|^q \right)^\frac{1}{q} < \infty \:.
\end{equation}
Linearity is clear since the integral is a linear operator. Let us prove now continuity. Let $\varphi_k$ be a sequence belonging to $\mathcal{S}$ that converges to $0$ as $k\rightarrow\infty$; therefore, having in mind equation (\ref{LsSp}), $(f,\varphi_k) \rightarrow 0, \:\:\forall f$. Now, invoking proposition (\ref{SsL}), we can confirm that $\mathcal{S}(\mathbb{R}^n)\subset\mathcal{S}^\prime(\mathbb{R}^n)$.
\end{proof}

\begin{definition} 
\label{FT}
We denote the Fourier transform and the inverse as
\begin{equation}
\tilde g(k) = \mathcal{F}[g](k) =  \int_{\mathbb{R}^n} \D x\:g(x) e^{-i k\cdot x} \quad \mathrm{and} \quad g(x) =\mathcal{F}^{-1}[\tilde g](x) = \frac{1}{(2\pi)^n} \int_{\mathbb{R}^n} \D k \:\tilde{g}(k) e^{ik\cdot x} \;.
\end{equation}
\end{definition}

\begin{proposition}
\label{propint}
The Fourier transform and the inverse always exist in $\calS(\RR^n)$.
\end{proposition}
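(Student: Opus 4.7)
The plan is to show that for every $g\in\calS(\RR^n)$ the defining integrals
\begin{equation*}
\tilde g(k)=\int_{\RR^n}\D x\,g(x)\,e^{-ik\cdot x},\qquad g(x)=\frac1{(2\pi)^n}\int_{\RR^n}\D k\,\tilde g(k)\,e^{ik\cdot x}
\end{equation*}
converge absolutely for every fixed $k\in\RR^n$ (resp.\ $x\in\RR^n$), so both operations are pointwise well-defined.

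First, I would note that $|e^{\pm ik\cdot x}|=1$, so the question reduces to the integrability of $g$ itself. The quickest route is to invoke proposition (\ref{SsL}) with $p=1$, which gives $\calS(\RR^n)\subset L^1(\RR^n)$, hence $|g(x)e^{-ik\cdot x}|=|g(x)|\in L^1$, uniformly in $k$. This immediately shows that $\tilde g(k)$ exists for every $k\in\RR^n$ and in fact defines a bounded function of $k$.

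Alternatively, one can argue directly from the Schwartz norms: by definition (\ref{def:S}), for any $s\in\mathbb{N}$,
\begin{equation*}
|g(x)|\leq \|g\|_s\,(1+|x|^2)^{-s/2}.
\end{equation*}
Choosing $s>n$ makes $(1+|x|^2)^{-s/2}$ integrable over $\RR^n$, and the comparison test yields absolute convergence of the Fourier integral. The same estimate applied to $\tilde g$ (once we know $\tilde g\in\calS$, or by running the same argument with $\tilde g$ in place of $g$) gives existence of $\calF^{-1}[\tilde g]$.

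There is no real obstacle: the statement is essentially a corollary of proposition (\ref{SsL}) together with $|e^{\pm ik\cdot x}|=1$. The only point worth emphasizing is that absolute convergence holds \emph{pointwise} in $k$ (and in $x$ for the inverse), so $\tilde g$ and $\calF^{-1}[\tilde g]$ are defined everywhere, not merely almost everywhere, which is what ``always exist'' means in this setting.
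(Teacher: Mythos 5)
Your proposal is correct and follows essentially the same route as the paper: both invoke proposition (\ref{SsL}) to place $g$ in $L^1(\RR^n)$, use $|e^{\pm ik\cdot x}|=1$ to bound the Fourier integral by $\int_{\RR^n}|g(x)|\,\D x<\infty$, and handle the inverse transform by the same argument. Your additional direct estimate via the Schwartz norms is a harmless elaboration of the same idea.
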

\begin{proof}
Let $g(x) \in \mathcal{S}(\mathbb{R}^n)$. Therefore, because of  $\mathcal{S}(\mathbb{R}^n) \subset L^p(\mathbb{R}^n)$,
\begin{equation}
\left| \mathcal{F}[g](k) \right| = \left|\int_{\mathbb{R}^n} \D x\:g(x) e^{-i k\cdot x}\right| \leq \int_{\mathbb{R}^n} \D x\left|g(x) e^{-i k\cdot x}\right| \leq  \int_{\mathbb{R}^n} \D x\left|g(x)\right| < \infty \;.
\end{equation}
For the inverse, the proof is equivalent.
\end{proof}

\begin{definition}
\label{FTS}
Taking into account that the Fourier transform is a linear and continuous map from $\mathcal{S}$ to $\mathcal{S}$ \cite{Vladimirov}, we define the Fourier transform and the inverse for any distribution of the slow growth $g\in\mathcal{S}^\prime(\mathbb{R}^n)$ as
\begin{equation}
(\mathcal{F}[g],\varphi) := (g,\mathcal{F}[\varphi]) \qquad \mathrm{and} \qquad  (\mathcal{F}^{-1}[g],\varphi) := (g,\mathcal{F}^{-1}[\varphi])  \qquad \varphi\in\mathcal{S}(\mathbb{R}^n) \;.
\end{equation}
\end{definition}

\begin{definition}
\label{PropConvSp}
Let $f\in\calS^\prime(\RR^n)$ and $g\in\calS(\RR^n)$. We define the convolution between them as
\begin{equation}
\label{convSp}
\left(f\ast g,\varphi\right) := \left(f,\psi\ast\varphi\right), \qquad \forall\varphi\in\calS(\RR^n)
\end{equation}
where $\psi(x):=g(-x)$ and we have used the fact that, if $\psi,\varphi\in\calS$, then $\psi\ast\varphi\in\calS$ \cite{Vladimirov_GF}. 
\end{definition}

\noindent
Let us verify that this definition is proper, that is to say, the right-hand side of (\ref{convSp}) defines a continuous linear functional on $\calS^\prime$.
\begin{proposition}
\label{ConvSprime}
Let $f\in\calS^\prime(\RR^n)$ and $g\in\calS(\RR^n)$. The convolution $f\ast g\in\calS^\prime(\RR^n)$.
\end{proposition}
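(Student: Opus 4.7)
The plan is to verify, for the functional $\Lambda: \varphi \mapsto (f, \psi \ast \varphi)$ with $\psi(x) := g(-x)$, the three defining properties of a tempered distribution: well-definedness on $\calS(\RR^n)$, linearity, and continuity. Well-definedness is automatic from the fact invoked in Definition \ref{PropConvSp} that $\psi \ast \varphi \in \calS(\RR^n)$ whenever $\psi, \varphi \in \calS(\RR^n)$, so that $(f,\psi\ast\varphi)$ makes sense. Linearity in $\varphi$ will follow from the bilinearity of convolution together with the linearity of $f \in \calS^\prime(\RR^n)$. Thus the only nontrivial content of the proposition is establishing continuity.

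For continuity, I would reduce the problem to showing that, for any fixed $\psi \in \calS(\RR^n)$, the operator $T_\psi: \varphi \mapsto \psi \ast \varphi$ is continuous from $\calS(\RR^n)$ into itself; then composing with the already continuous $f$ yields $(f,\psi\ast\varphi_k) \to 0$ whenever $\varphi_k \to 0$ in $\calS$. Concretely, given $s \in \mathbb{N}$ and any multi-index $\alpha$ with $|\alpha| \leq s$, I would use $D^\alpha(\psi\ast\varphi) = (D^\alpha\psi)\ast\varphi$ together with the Peetre-type inequality
\begin{equation*}
(1 + |x|^2)^{s/2} \leq 2^{s/2} (1 + |x-y|^2)^{s/2} (1 + |y|^2)^{s/2}
\end{equation*}
inside the convolution integral to obtain
\begin{equation*}
(1+|x|^2)^{s/2} |D^\alpha(\psi\ast\varphi)(x)| \leq 2^{s/2}\|\psi\|_s \int_{\RR^n} (1+|y|^2)^{s/2} |\varphi(y)|\, \D y.
\end{equation*}
The remaining integral will be controlled by inserting an $L^1$-integrable weight $(1+|y|^2)^{-(n+1)/2}$ and bounding the complementary factor by the Schwartz seminorm $\|\varphi\|_{s+n+1}$. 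This should yield an estimate of the shape $\|\psi\ast\varphi\|_s \leq C_{s,n}\,\|\psi\|_s\,\|\varphi\|_{s+n+1}$, from which the continuity of $T_\psi$ and hence of $\Lambda$ follows at once.

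The main obstacle is the weight-versus-decay bookkeeping in the estimate above: one has to choose the auxiliary seminorm order on $\varphi$ high enough that, after absorbing the translated weight $(1+|y|^2)^{s/2}$, what remains is integrable over $\RR^n$, while keeping the dependence on $\psi$ and $\varphi$ bilinear so that the same constant $C_{s,n}$ works uniformly. Once this estimate is in hand, $\varphi_k \to 0$ in $\calS$ forces $\psi\ast\varphi_k \to 0$ in $\calS$, continuity of $f$ gives $(\Lambda,\varphi_k) \to 0$, and one concludes $f\ast g \in \calS^\prime(\RR^n)$.
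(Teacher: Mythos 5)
Your proposal is correct and follows essentially the same route as the paper: both reduce the problem to showing that $\varphi\mapsto\psi\ast\varphi$ is continuous on $\calS(\RR^n)$ by transferring derivatives onto $\psi$ and splitting the polynomial weight $(1+|x|^2)^{s/2}$ across the two factors of the convolution integral, then composing with the continuity of $f$. The only difference is cosmetic --- you use the Peetre inequality and absorb the weight into a higher-order seminorm $\|\varphi\|_{s+n+1}$ while integrating $\varphi$, whereas the paper takes the supremum over $\varphi$ and integrates $\psi$ --- and both yield the required estimate.
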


\begin{proof}
Let $\varphi,\phi\in\calS(\RR^n)$ and $a,b\in\RR$. The linearity is clear because 
\begin{equation}
(f,\psi\ast(a\varphi + b\phi)) = a(f,\psi\ast\varphi)+b(f,\psi\ast\phi) = a(f\ast g,\varphi)+b(f\ast g,\phi) 
\end{equation}
where we have used that the convolution between functions is linear and the fact that $f$ is a distribution. For continuity, first of all, we invoke the Schwartz's theorem \cite{Vladimirov},
\begin{align}
|\left(f,\psi\ast\varphi\right)| &\leq C ||\psi\ast\varphi||_s = C  \underset{x\in\mathbb{R}^n; |\alpha|\leq s}{\mathrm{sup}} \left(1+ |x|^2 \right)^\frac{s}{2}\left|D^\alpha(\psi\ast\varphi)(x)\right| \nonumber\\
&\leq C \underset{x\in\mathbb{R}^n; |\alpha|\leq s}{\mathrm{sup}} \left(1+ |x|^s \right)^\frac{s}{2}\left|(D^\alpha\psi\ast\varphi)(x)\right|\nonumber\\
&\leq C \underset{x\in\mathbb{R}^n; |\alpha|\leq s}{\mathrm{sup}} \int_{\RR^n} \D y \left(1+ |x|^2 \right)^\frac{s}{2} \:|D^\alpha\psi(y)|| \varphi(x-y)|\nonumber\\
& = C \underset{z\in\mathbb{R}^n; |\alpha|\leq s}{\mathrm{sup}} |\varphi(z)| \int_{\RR^n} \D y \left(1+ |z+y|^2 \right)^\frac{s}{2} \:|D^\alpha\psi(y)|\nonumber\\
& \leq C_1 \underset{z\in\mathbb{R}^n; |\alpha|\leq s}{\mathrm{sup}} |\varphi(z)| \int_{\RR^n} \D y \left(1+ |z|^2+|y|^2 \right)^\frac{s}{2} \:|D^\alpha\psi(y)|\nonumber\\
& \leq C_1 \underset{z\in\mathbb{R}^n; |\alpha|\leq s}{\mathrm{sup}} |\varphi(z)| \int_{\RR^n} \D y \left[\left(1+ |z|^2\right)^{^\frac{s}{2}}+|y|^s \right] \:|D^\alpha\psi(y)|\nonumber\\
&=  C_1 \underset{z\in\mathbb{R}^n; |\alpha|\leq s}{\mathrm{sup}} \left(1+ |z|^2\right)^{^\frac{s}{2}} |\varphi(z)| \int_{\RR^n} \D y |D^\alpha\psi(y)| \nonumber\\
&+ C_1 \underset{z\in\mathbb{R}^n; |\alpha|\leq s}{\mathrm{sup}} |\varphi(z)| \int_{\RR^n} \D y \:|y|^s |D^\alpha\psi(y)|\nonumber\\
&= C_2 \underset{z\in\mathbb{R}^n; |\alpha|\leq s}{\mathrm{sup}} \left(1+ |z|^2\right)^{^\frac{s}{2}} |\varphi(z)| + C_3 \underset{z\in\mathbb{R}^n; |\alpha|\leq s}{\mathrm{sup}} |\varphi(z)| \label{proofCSp} 
\end{align}
where $C_i\in\RR^+$ and, in the last step, we have used proposition (\ref{SsL}). Now, let us prove continuity. Let $\varphi_k$ be a sequence belonging to $\mathcal{S}$ that converges to $0$ as $k\rightarrow\infty$; therefore, having in mind equation (\ref{proofCSp}), $(f, \psi\ast\varphi_k) \rightarrow 0, \:\:\forall f$.
\end{proof}

\begin{proposition}
\label{StoS}
The Fourier transform and the inverse are continuous and linear operations from $\mathcal{S}^\prime$ to $\mathcal{S}^\prime$.
\end{proposition}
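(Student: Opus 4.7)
The plan is to leverage the definition (\ref{FTS}), which specifies $\mathcal{F}$ and $\mathcal{F}^{-1}$ on $\mathcal{S}^\prime$ by duality, together with the already-cited fact that $\mathcal{F}: \mathcal{S} \to \mathcal{S}$ and $\mathcal{F}^{-1}: \mathcal{S} \to \mathcal{S}$ are continuous and linear maps on the Schwartz space. Since the argument for $\mathcal{F}^{-1}$ is identical to that for $\mathcal{F}$, I would only write out the latter.

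First I would verify well-definedness: given $g\in\mathcal{S}^\prime$, one must check that the prescription $\varphi \mapsto (g,\mathcal{F}[\varphi])$ defines an element of $\mathcal{S}^\prime$. Linearity of this map follows immediately from the linearity of $\mathcal{F}$ on $\mathcal{S}$ combined with the linearity of $g$ as a functional. For continuity, take any sequence $\varphi_k \to 0$ in $\mathcal{S}$; by the continuity of $\mathcal{F}$ on $\mathcal{S}$ one has $\mathcal{F}[\varphi_k]\to 0$ in $\mathcal{S}$, hence $(\mathcal{F}[g],\varphi_k)=(g,\mathcal{F}[\varphi_k]) \to 0$ since $g\in\mathcal{S}^\prime$. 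This guarantees $\mathcal{F}[g]\in\mathcal{S}^\prime$, so the map $\mathcal{F}:\mathcal{S}^\prime\to\mathcal{S}^\prime$ makes sense.

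Next I would establish linearity of $\mathcal{F}$ on $\mathcal{S}^\prime$. For $g_1,g_2\in\mathcal{S}^\prime$, scalars $a,b$, and any $\varphi\in\mathcal{S}$, a direct computation gives
\begin{equation}
(\mathcal{F}[a g_1 + b g_2],\varphi) = (a g_1 + b g_2,\mathcal{F}[\varphi]) = a(g_1,\mathcal{F}[\varphi]) + b(g_2,\mathcal{F}[\varphi]) = (a\mathcal{F}[g_1] + b\mathcal{F}[g_2],\varphi),
\end{equation}
which, holding for every test function, yields $\mathcal{F}[a g_1 + b g_2] = a\mathcal{F}[g_1] + b\mathcal{F}[g_2]$ in $\mathcal{S}^\prime$.

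Finally, I would prove continuity. Suppose $g_k \to g$ in $\mathcal{S}^\prime$, i.e.\ $(g_k,\psi) \to (g,\psi)$ for every $\psi\in\mathcal{S}$. For an arbitrary $\varphi\in\mathcal{S}$, the test function $\psi := \mathcal{F}[\varphi]$ again lies in $\mathcal{S}$ by the known mapping property, so
\begin{equation}
(\mathcal{F}[g_k],\varphi) = (g_k,\mathcal{F}[\varphi]) \longrightarrow (g,\mathcal{F}[\varphi]) = (\mathcal{F}[g],\varphi),
\end{equation}
which is exactly the statement that $\mathcal{F}[g_k]\to \mathcal{F}[g]$ in $\mathcal{S}^\prime$. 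There is no real obstacle here; the entire proof is a transparent transfer of properties from $\mathcal{S}$ to $\mathcal{S}^\prime$ via duality, and the only nontrivial input—continuity of $\mathcal{F}$ on $\mathcal{S}$—is assumed from the cited reference used already in definition (\ref{FTS}).
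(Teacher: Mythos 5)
Your proof is correct and follows essentially the same duality argument as the paper: transfer linearity and continuity from $\mathcal{F}:\mathcal{S}\to\mathcal{S}$ via the defining relation $(\mathcal{F}[g],\varphi)=(g,\mathcal{F}[\varphi])$, with the continuity step for $g_k\to g$ in $\mathcal{S}^\prime$ being word-for-word the paper's computation. If anything, your version is slightly more complete, since you also verify explicitly that $\varphi\mapsto(g,\mathcal{F}[\varphi])$ is a continuous functional on $\mathcal{S}$ (well-definedness) and that the map is linear in $g$, whereas the paper only records linearity in the test-function slot.
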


\begin{proof}
Let $g\in\mathcal{S}^\prime(\mathbb{R}^n)$, $\varphi,\phi\in\mathcal{S}(\mathbb{R}^n)$ and $a,b\in\RR$. The linearity is evident since
\begin{equation}
(\calF[g],a\phi+b\varphi) = (g, \calF[a\phi+b\varphi]) = a(g,\calF[\phi]) + b(g,\calF[\varphi]) = a(\calF[g],\phi) + b(\calF[g],\varphi)
\end{equation}
where we have use the Fourier transform in $\calS$ is linear and the fact that $g$ is a distribution. Let us prove now continuity. Let $g$ a continuous tempered distribution in the sense that it exists a sequence $g_k$ belonging to $\mathcal{S}^\prime$ that converges to $g$ as $k\rightarrow\infty$. Therefore, 
\begin{equation}
\lim_{k\rightarrow\infty}(\mathcal{F}[g_k],\varphi) = \lim_{k\rightarrow\infty}(g_k,\mathcal{F}[\varphi])= (g,\mathcal{F}[\varphi]) = (\mathcal{F}[g],\varphi),
\end{equation}
namely 
\begin{equation}
\lim_{k\rightarrow\infty} \mathcal{F}[g_k]  = \mathcal{F}[g] \;.
\end{equation}
For the inverse, the proof is equivalent. 
\end{proof}

\begin{proposition}
\label{FourierConv}
Let $f\in\calS^\prime(\RR^n)$ and $g\in\calS(\RR^n)$. Then,
\begin{equation}
\calF\left[f\ast g\right] = \calF\left[f\right]\cdot\calF\left[g\right] \in\calS^\prime \:.
\end{equation}
\end{proposition}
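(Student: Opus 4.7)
The plan is to prove the identity by testing both sides against an arbitrary Schwartz function $\varphi\in\calS(\RR^n)$ and unwinding every definition until the equality reduces to a classical identity between Schwartz functions. First I would note that the right-hand side already makes sense in $\calS^\prime(\RR^n)$: since the Fourier transform maps $\calS(\RR^n)$ continuously into itself, $\calF[g]$ is a Schwartz function, and the product of a tempered distribution with a Schwartz function is a well-defined tempered distribution (by the usual definition $(\calF[f]\cdot\calF[g],\varphi):=(\calF[f],\calF[g]\,\varphi)$). By proposition (\ref{ConvSprime}), the left-hand side also lies in $\calS^\prime(\RR^n)$, and proposition (\ref{StoS}) ensures its Fourier transform is in $\calS^\prime(\RR^n)$ as well.

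Next, using definition (\ref{FTS}) and definition (\ref{PropConvSp}) in succession, I would compute
\begin{equation}
\bigl(\calF[f\ast g],\varphi\bigr) = \bigl(f\ast g,\calF[\varphi]\bigr) = \bigl(f,\psi\ast\calF[\varphi]\bigr),
\end{equation}
where $\psi(x):=g(-x)$. The target on the other side reads
\begin{equation}
\bigl(\calF[f]\cdot\calF[g],\varphi\bigr) = \bigl(\calF[f],\calF[g]\,\varphi\bigr) = \bigl(f,\calF[\calF[g]\,\varphi]\bigr).
\end{equation}
So the whole proposition reduces to the pointwise identity
\begin{equation}
\psi\ast\calF[\varphi] \;=\; \calF\bigl[\calF[g]\,\varphi\bigr] \qquad \text{in } \calS(\RR^n).
\end{equation}

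The hard part is justifying this last identity, but it is classical for Schwartz functions: by Fubini, for $\phi_1,\phi_2\in\calS(\RR^n)$ one has $\calF[\phi_1\,\phi_2]=(2\pi)^{-n}\calF[\phi_1]\ast\calF[\phi_2]$, and a direct change of variables gives $\calF[\calF[g]](x)=(2\pi)^n g(-x)=(2\pi)^n\psi(x)$. Applying the product-to-convolution rule with $\phi_1=\calF[g]$ and $\phi_2=\varphi$ therefore yields
\begin{equation}
\calF\bigl[\calF[g]\,\varphi\bigr] = \tfrac{1}{(2\pi)^n}\,\calF[\calF[g]]\ast\calF[\varphi] = \psi\ast\calF[\varphi],
\end{equation}
which closes the argument. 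The only subtleties are bookkeeping the factors of $(2\pi)^n$ imposed by the Fourier convention in definition (\ref{FT}) and checking that $\calF[g]\,\varphi\in\calS(\RR^n)$ so that its Fourier transform is literally a Schwartz function (not merely a tempered distribution); both are immediate since $\calS$ is closed under products and under $\calF$. Combining the three displayed equalities gives $(\calF[f\ast g],\varphi)=(\calF[f]\cdot\calF[g],\varphi)$ for every $\varphi\in\calS(\RR^n)$, i.e.\ the desired equality of tempered distributions.
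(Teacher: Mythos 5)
Your proof is correct and follows essentially the same route as the paper: both reduce $(\calF[f\ast g],\varphi)$ to $(f,\psi\ast\calF[\varphi])$ via the definitions of $\calF$ and $\ast$ on $\calS^\prime$, and then close the gap using the classical product--convolution exchange for Schwartz functions together with the reflection identity $\calF[\calF[g]]=(2\pi)^n g(-\cdot)$ (the paper phrases this step through $\calF^{-1}$ instead, but it is the same computation read in the opposite direction). Your bookkeeping of the $(2\pi)^n$ factors under the paper's convention is accurate.
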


\begin{proof}
 Let $\varphi\in\calS(\RR^n)$. According to proposition (\ref{ConvSprime}), $f\ast g\in\calS^\prime$.  Moreover, because of proposition (\ref{StoS}), the Fourier transform of the convolution belongs to $\calS^\prime(\RR^n)$. Therefore, using (\ref{FTS}) and (\ref{PropConvSp}), we get
 \begin{equation}
 \label{convF1}
 \left(\calF\left[f\ast g\right],\varphi\right) =  \left(f\ast g,\calF\left[\varphi\right]\right) = \left(f,\psi\ast\calF\left[\varphi\right]\right)
\end{equation}
where $\psi(x):=g(-x)$. Using the fact that $\calF^{-1}\left[\calF\left[f\right]\right]= f \in \calS^\prime$, equation (\ref{convF1}) becomes
\begin{equation}
\left(f,\psi\ast\calF\left[\varphi\right]\right) = \left(\calF\left[f\right],\calF^{-1}\left[\psi\ast\calF\left[\varphi\right]\right]\right) =  \left(\calF\left[f\right],(2\pi)^n \calF^{-1}\left[\psi\right]\varphi\right)
\end{equation}
where we have used the property
\begin{equation}
\calF^{-1}\left[\phi\ast \varphi\right] = (2\pi)^n \calF^{-1}\left[\phi\right] \calF^{-1}\left[\varphi\right] \qquad \varphi,\phi\in\calS(\RR^n)\:.
\end{equation}
Finally, having in mind that
\begin{equation}
\calF^{-1}\left[\psi\right] = \frac{1}{(2\pi)^n} \calF\left[\psi(-x)\right]\:,
\end{equation}
we get
\begin{equation}
 \left(\calF\left[f\ast g\right],\varphi\right) = \left(\calF\left[f\right],\calF\left[g\right]\:\varphi\right) =  \left(\calF\left[f\right] \calF\left[g\right], \varphi\right)
\end{equation}
where the last step we have used proposition (\ref{SsSp}). Equivalently,
\begin{equation}
\calF\left[f\ast g\right] = \calF\left[f\right] \calF\left[g\right] \qquad \forall \varphi\:.
\end{equation}
\end{proof}

\bibliographystyle{utphys}
{\footnotesize\bibliography{References}}

\providecommand{\href}[2]{#2}\begingroup\raggedright\begin{thebibliography}{10}

\bibitem{Efimov1967}
G.~V. Efimov, ``{Non-local quantum theory of the scalar field},''
  \href{http://dx.doi.org/https://doi.org/cmp/1103839975}{{\em Communications
  in Mathematical Physics} {\bfseries 5} no.~1, (1967) 42 -- 56}.

\bibitem{Krasnikov1987}
N.~V. Krasnikov, ``{Nonlocal gauge theories},''
  \href{http://dx.doi.org/10.1007/BF01017588}{{\em Theor. Math. Phys.}
  {\bfseries 73} (1987) 1184--1190}.

\bibitem{Moffat1990}
J.~W. Moffat, ``Finite nonlocal gauge field theory,''
  \href{http://dx.doi.org/10.1103/PhysRevD.41.1177}{{\em Phys. Rev. D}
  {\bfseries 41} (Feb, 1990) 1177--1184}.

\bibitem{Evens1991}
D.~Evens, J.~W. Moffat, G.~Kleppe, and R.~P. Woodard, ``Nonlocal
  regularizations of gauge theories,''
  \href{http://dx.doi.org/10.1103/PhysRevD.43.499}{{\em Phys. Rev. D}
  {\bfseries 43} (Jan, 1991) 499--519}.

\bibitem{Tomboulis1980}
E.~Tomboulis, ``{Renormalizability and Asymptotic Freedom in Quantum
  Gravity},'' \href{http://dx.doi.org/10.1016/0370-2693(80)90550-X}{{\em Phys.
  Lett. B} {\bfseries 97} (1980) 77--80}.

\bibitem{Tomboulis1997}
E.~T. Tomboulis, ``{Superrenormalizable gauge and gravitational theories},''
  \href{http://arxiv.org/abs/hep-th/9702146}{{\ttfamily arXiv:hep-th/9702146}}.

\bibitem{Modesto2012}
L.~Modesto, ``Super-renormalizable quantum gravity,''
  \href{http://dx.doi.org/10.1103/PhysRevD.86.044005}{{\em Phys. Rev. D}
  {\bfseries 86} (Aug, 2012) 044005}.

\bibitem{WITTEN1986}
E.~Witten, ``Non-commutative geometry and string field theory,''
  \href{http://dx.doi.org/https://doi.org/10.1016/0550-3213(86)90155-0}{{\em
  Nuclear Physics B} {\bfseries 268} no.~2, (1986) 253--294}.

\bibitem{Polchinski1998}
J.~Polchinski, \href{http://dx.doi.org/10.1017/CBO9780511618123}{{\em {String
  theory. Vol. 2: Superstring theory and beyond}}}.
\newblock Cambridge Monographs on Mathematical Physics. Cambridge University
  Press, 12, 2007.

\bibitem{Ashtekar2013}
A.~Ashtekar, ``{Introduction to loop quantum gravity and cosmology},''
  \href{http://dx.doi.org/10.1007/978-3-642-33036-0_2}{{\em Lect. Notes Phys.}
  {\bfseries 863} (2013) 31--56},
  \href{http://arxiv.org/abs/1201.4598}{{\ttfamily arXiv:1201.4598 [gr-qc]}}.

\bibitem{Biswas2012}
T.~Biswas, E.~Gerwick, T.~Koivisto, and A.~Mazumdar, ``Towards singularity- and
  ghost-free theories of gravity,''
  \href{http://dx.doi.org/10.1103/physrevlett.108.031101}{{\em Physical Review
  Letters} {\bfseries 108} no.~3, (Jan, 2012) }.

\bibitem{Biswas2010}
T.~Biswas, T.~Koivisto, and A.~Mazumdar, ``Towards a resolution of the
  cosmological singularity in non-local higher derivative theories of
  gravity,'' \href{http://dx.doi.org/10.1088/1475-7516/2010/11/008}{{\em
  Journal of Cosmology and Astroparticle Physics} {\bfseries 2010} no.~11,
  (Nov, 2010) 008--008}.

\bibitem{Biswas2013}
T.~Biswas, T.~Koivisto, and A.~Mazumdar, ``Nonlocal theories of gravity: the
  flat space propagator,'' \href{http://arxiv.org/abs/1302.0532}{{\ttfamily
  arXiv:1302.0532 [gr-qc]}}.

\bibitem{Buoninfante2018_2}
L.~Buoninfante, G.~Harmsen, S.~Maheshwari, and A.~Mazumdar, ``Nonsingular
  metric for an electrically charged point-source in ghost-free infinite
  derivative gravity,''
  \href{http://dx.doi.org/10.1103/physrevd.98.084009}{{\em Physical Review D}
  {\bfseries 98} no.~8, (Oct, 2018) }.

\bibitem{Kol2020}
I.~Kol{\'a}{\v r} and A.~Mazumdar, ``Nut charge in linearized infinite
  derivative gravity,''
  \href{http://dx.doi.org/10.1103/physrevd.101.124005}{{\em Physical Review D}
  {\bfseries 101} no.~12, (Jun, 2020) }.

\bibitem{Boos2021}
J.~Boos and I.~Kol{\'a}{\v r}, ``Nonlocality and gravitoelectromagnetic
  duality,'' \href{http://dx.doi.org/10.1103/physrevd.104.024018}{{\em Physical
  Review D} {\bfseries 104} no.~2, (Jul, 2021) }.

\bibitem{Kol2021}
I.~Kol{\'a}{\v r} and J.~Boos, ``Retarded field of a uniformly accelerated
  source in nonlocal scalar field theory,''
  \href{http://dx.doi.org/10.1103/physrevd.103.105004}{{\em Physical Review D}
  {\bfseries 103} no.~10, (May, 2021) }.

\bibitem{Frolov2015_2}
V.~P. Frolov, ``Mass gap for black-hole formation in higher-derivative and
  ghost-free gravity,''
  \href{http://dx.doi.org/10.1103/physrevlett.115.051102}{{\em Physical Review
  Letters} {\bfseries 115} no.~5, (Jul, 2015) }.

\bibitem{Buoninfante2021}
L.~Buoninfante and Y.~Miyashita, ``Gravitational field of scalar lumps in
  higher-derivative gravity,''
  \href{http://dx.doi.org/10.1103/physrevd.103.124068}{{\em Physical Review D}
  {\bfseries 103} no.~12, (Jun, 2021) }.

\bibitem{Buoninfante2018_3}
L.~Buoninfante, A.~S. Cornell, G.~Harmsen, A.~S. Koshelev, G.~Lambiase,
  J.~Marto, and A.~Mazumdar, ``Towards nonsingular rotating compact object in
  ghost-free infinite derivative gravity,''
  \href{http://dx.doi.org/10.1103/physrevd.98.084041}{{\em Physical Review D}
  {\bfseries 98} no.~8, (Oct, 2018) }.

\bibitem{Boos2018}
J.~Boos, V.~P. Frolov, and A.~Zelnikov, ``Gravitational field of static p
  -branes in linearized ghost-free gravity,''
  \href{http://dx.doi.org/10.1103/physrevd.97.084021}{{\em Physical Review D}
  {\bfseries 97} no.~8, (Apr, 2018) }.

\bibitem{Boos2020_3}
J.~Boos, ``Angle deficit and nonlocal gravitoelectromagnetism around a slowly
  spinning cosmic string,''
  \href{http://dx.doi.org/10.1142/s0218271820430270}{{\em International Journal
  of Modern Physics D} {\bfseries 29} no.~14, (Oct, 2020) 2043027}.

\bibitem{Boos2020_2}
J.~Boos, J.~Pinedo~Soto, and V.~P. Frolov, ``Ultrarelativistic spinning objects
  in nonlocal ghost-free gravity,''
  \href{http://dx.doi.org/10.1103/physrevd.101.124065}{{\em Physical Review D}
  {\bfseries 101} no.~12, (Jun, 2020) }.

\bibitem{Biswas2006}
T.~Biswas, A.~Mazumdar, and W.~Siegel, ``Bouncing universes in string-inspired
  gravity,'' \href{http://dx.doi.org/10.1088/1475-7516/2006/03/009}{{\em
  Journal of Cosmology and Astroparticle Physics} {\bfseries 2006} no.~03,
  (Mar, 2006) 009--009}.

\bibitem{Koshelev2012}
A.~S. Koshelev and S.~Y. Vernov, ``On bouncing solutions in non-local
  gravity,'' \href{http://dx.doi.org/10.1134/s106377961205019x}{{\em Physics of
  Particles and Nuclei} {\bfseries 43} no.~5, (Sep, 2012) 666--668}.

\bibitem{Biswas2012_2}
T.~Biswas, A.~S. Koshelev, A.~Mazumdar, and S.~Y. Vernov, ``Stable bounce and
  inflation in non-local higher derivative cosmology,''
  \href{http://dx.doi.org/10.1088/1475-7516/2012/08/024}{{\em Journal of
  Cosmology and Astroparticle Physics} {\bfseries 2012} no.~08, (Aug, 2012)
  024--024}.

\bibitem{Kolar2021}
I.~Kol\'a\v{r}, F.~J.~M. Torralba, and A.~Mazumdar, ``{New non-singular
  cosmological solution of non-local gravity},''
  \href{http://arxiv.org/abs/2109.02143}{{\ttfamily arXiv:2109.02143 [gr-qc]}}.

\bibitem{Kumar2020}
K.~S. Kumar, S.~Maheshwari, A.~Mazumdar, and J.~Peng, ``Stable, nonsingular
  bouncing universe with only a scalar mode,''
  \href{http://dx.doi.org/10.1103/physrevd.102.024080}{{\em Physical Review D}
  {\bfseries 102} no.~2, (Jul, 2020) }.

\bibitem{Kumar2021}
K.~S. Kumar, S.~Maheshwari, A.~Mazumdar, and J.~Peng, ``An anisotropic bouncing
  universe in non-local gravity,''
  \href{http://dx.doi.org/10.1088/1475-7516/2021/07/025}{{\em Journal of
  Cosmology and Astroparticle Physics} {\bfseries 2021} no.~07, (Jul, 2021)
  025}.

\bibitem{Kilicarslan2019}
E.~Kilicarslan, ``pp -waves as exact solutions to ghost-free infinite
  derivative gravity,''
  \href{http://dx.doi.org/10.1103/physrevd.99.124048}{{\em Physical Review D}
  {\bfseries 99} no.~12, (Jun, 2019) }.

\bibitem{Dengiz2020}
S.~Dengiz, E.~Kilicarslan, I.~Kol{\'a}{\v r}, and A.~Mazumdar, ``Impulsive
  waves in ghost-free infinite derivative gravity in anti--de sitter
  spacetime,'' \href{http://dx.doi.org/10.1103/physrevd.102.044016}{{\em
  Physical Review D} {\bfseries 102} no.~4, (Aug, 2020) }.

\bibitem{Kol2021_2}
I.~Kol{\'a}{\v r}, T.~M{\'a}lek, and A.~Mazumdar, ``Exact solutions of nonlocal
  gravity in a class of almost universal spacetimes,''
  \href{http://dx.doi.org/10.1103/physrevd.103.124067}{{\em Physical Review D}
  {\bfseries 103} no.~12, (Jun, 2021) }.

\bibitem{Kolar2021_3}
I.~Kol\'a\v{r}, T.~M\'alek, S.~Dengiz, and E.~Kilicarslan, ``{Exact gyratons in
  higher and infinite derivative gravity},''
  \href{http://arxiv.org/abs/2107.11884}{{\ttfamily arXiv:2107.11884 [gr-qc]}}.

\bibitem{Barnaby2008}
N.~Barnaby and N.~Kamran, ``Dynamics with infinitely many derivatives: the
  initial value problem,''
  \href{http://dx.doi.org/10.1088/1126-6708/2008/02/008}{{\em Journal of High
  Energy Physics} {\bfseries 2008} no.~02, (Feb, 2008) 008--008}.

\bibitem{Gorka2012}
P.~G{\'o}rka, H.~Prado, and E.~G. Reyes, ``The initial value problem for
  ordinary differential equations with infinitely many derivatives,''
  \href{http://dx.doi.org/10.1088/0264-9381/29/6/065017}{{\em Classical and
  Quantum Gravity} {\bfseries 29} no.~6, (Mar, 2012) 065017}.

\bibitem{CALCAGNI2008}
G.~Calcagni, M.~Montobbio, and G.~Nardelli, ``Localization of nonlocal
  theories,''
  \href{http://dx.doi.org/https://doi.org/10.1016/j.physletb.2008.03.024}{{\em
  Physics Letters B} {\bfseries 662} no.~3, (2008) 285--289}.

\bibitem{Vladimirov_GF}
V.~S. Vladimirov, {\em Generalized functions in mathematical physics}.
\newblock 1976.

\bibitem{Vladimirov}
V.~Vladimirov, {\em Equations of Mathematical Physics}.
\newblock Mir, 1984.

\bibitem{Hormander1990}
L.~H{\"o}rmander, {\em The Analysis of linear partial differential operators
  I}.
\newblock Springer Berlin Heidelberg, Berlin/Heidelbergerlin, 2nd~ed., 1990.

\bibitem{Llosa1994}
J.~Llosa and J.~Vives, ``Hamiltonian formalism for nonlocal lagrangians,''
  \href{http://dx.doi.org/10.1063/1.530492}{{\em Journal of Mathematical
  Physics} {\bfseries 35} (06, 1994) 2856--2877}.

\bibitem{Heredia2}
C.~Heredia and J.~Llosa, ``Non-local lagrangian mechanics: Noether's theorem
  and hamiltonian formalism,''
  \href{http://dx.doi.org/10.1088/1751-8121/ac265c}{{\em Journal of Physics A:
  Mathematical and Theoretical} {\bfseries 54} no.~42, (Sep, 2021) 425202}.

\bibitem{Gomis2001}
J.~Gomis, K.~Kamimura, and J.~Llosa, ``Hamiltonian formalism for space-time
  noncommutative theories,''
  \href{http://dx.doi.org/10.1103/physrevd.63.045003}{{\em Physical Review D}
  {\bfseries 63} no.~4, (Jan, 2001) }.

\bibitem{Carmichael1936}
R.~D. Carmichael, ``{Linear differential equations of infinite order},''
  \href{http://dx.doi.org/bams/1183498781}{{\em Bulletin of the American
  Mathematical Society} {\bfseries 42} no.~4, (1936) 193 -- 218}.

\bibitem{Carlsson2014}
M.~Carlsson, H.~Prado, and E.~G. Reyes, ``On linear differential equations with
  infinitely many derivatives,''
  \href{http://arxiv.org/abs/1403.0933}{{\ttfamily arXiv:1403.0933 [math-ph]}}.

\bibitem{Apostol1976}
T.~Apostol, {\em An{\'a}lisis matem{\'a}tico (Theorem 6.6)}.
\newblock Editorial Reverte, 1976.

\bibitem{Talaganis2015}
S.~Talaganis, T.~Biswas, and A.~Mazumdar, ``Towards understanding the
  ultraviolet behavior of quantum loops in infinite-derivative theories of
  gravity,'' \href{http://dx.doi.org/10.1088/0264-9381/32/21/215017}{{\em
  Classical and Quantum Gravity} {\bfseries 32} no.~21, (Oct, 2015) 215017}.

\bibitem{boos2020}
J.~Boos, ``Effects of non-locality in gravity and quantum theory,''
  \href{http://arxiv.org/abs/2009.10856}{{\ttfamily arXiv:2009.10856 [gr-qc]}}.

\bibitem{Buoninfante2018}
L.~Buoninfante, A.~S. Cornell, G.~Harmsen, A.~S. Koshelev, G.~Lambiase,
  J.~Marto, and A.~Mazumdar, ``Towards nonsingular rotating compact object in
  ghost-free infinite derivative gravity,''
  \href{http://dx.doi.org/10.1103/physrevd.98.084041}{{\em Physical Review D}
  {\bfseries 98} no.~8, (Oct, 2018) }.

\bibitem{Kolar2021_4}
I.~Kol{\'a}{\v r}, T.~M{\'a}lek, and A.~Mazumdar, ``Exact solutions of nonlocal
  gravity in a class of almost universal spacetimes,''
  \href{http://dx.doi.org/10.1103/physrevd.103.124067}{{\em Physical Review D}
  {\bfseries 103} no.~12, (Jun, 2021) }.

\bibitem{Gruman1973}
L.~Gruman, ``Some precisions on the fourier-borel transform and infinite order
  differential equations,''
  \href{http://dx.doi.org/10.1017/S0017089500001907}{{\em Glasgow Mathematical
  Journal} {\bfseries 14} no.~2, (1973) 161--167}.

\bibitem{Schwartz1966}
L.~Schwartz, {\em Th{\'e}orie des distributions}.
\newblock Paris, Hermann, 1966.

\end{thebibliography}\endgroup

\end{document}